\newcommand{\bra}[1]{{\left\langle{#1}\right\vert}}
\newcommand{\ket}[1]{{\left\vert{#1}\right\rangle}}
\newcommand{\qw}[1][-1]{\ar @{-} [0,#1]}
\newcommand{\qwx}[1][-1]{\ar @{-} [#1,0]}
\newcommand{\cw}[1][-1]{\ar @{=} [0,#1]}
\newcommand{\gate}[1]{*+<.6em>{#1} \POS ="i","i"+UR;"i"+UL **\dir{-};"i"+DL **\dir{-};"i"+DR **\dir{-};"i"+UR **\dir{-},"i" \qw}
\newcommand{\meter}{*=<1.8em,1.4em>{\xy ="j","j"-<.778em,.322em>;{"j"+<.778em,-.322em> \ellipse ur,_{}},"j"-<0em,.4em>;p+<.5em,.9em> **\dir{-},"j"+<2.2em,2.2em>*{},"j"-<2.2em,2.2em>*{} \endxy} \POS ="i","i"+UR;"i"+UL **\dir{-};"i"+DL **\dir{-};"i"+DR **\dir{-};"i"+UR **\dir{-},"i" \qw}
\newcommand{\multimeasure}[2]{*+<1em,.9em>{\hphantom{#2}} \qw \POS[0,0].[#1,0];p !C *{#2},p \drop\frm<.9em>{-}}
\newcommand{\multigate}[2]{*+<1em,.9em>{\hphantom{#2}} \POS [0,0]="i",[0,0].[#1,0]="e",!C *{#2},"e"+UR;"e"+UL **\dir{-};"e"+DL **\dir{-};"e"+DR **\dir{-};"e"+UR **\dir{-},"i" \qw}
\newcommand{\ghost}[1]{*+<1em,.9em>{\hphantom{#1}} \qw}
\newcommand{\rstick}[1]{*!L!<-.5em,0em>=<0em>{#1}}
\newcommand{\lstick}[1]{*!R!<.5em,0em>=<0em>{#1}}
\newcommand{\Qcircuit}{\xymatrix @*=<0em>}
\newcommand{\abs}[1]{\left\vert#1\right\vert}
\newcommand{\set}[1]{\left\{#1\right\}}
\newcommand{\complex}{{\mathbb C}}
\newcommand{\reals}{{\mathbb R}}
\newcommand{\polylog}{\mbox{polylog\,}}
\newcommand{\poly}{\mbox{poly\,}}
\newcommand{\Tr}{{\rm Tr}}
\def\ket#1{| #1 \rangle}
\def\bra#1{\langle #1 |}
\def\bracket#1#2{\langle #1 | #2 \rangle}
\def\ketbra#1#2{| #1 \rangle\!\langle #2 |}
\newcommand{\LL}{\mathcal{L}}
\newcommand{\spa}[1]{\mathcal{#1}}
\newcommand{\norm}[1]{\left\|\,#1\,\right\|}       % norm
\newcommand{\trnorm}[1]{\norm{#1}_{\mathrm {tr}}}  % trace norm
\newcommand{\snorm}[1]{\norm{#1}}    % spectral norm
\newcommand{\comp}{{\rm c}}
\newcommand{\unco}{{\rm u}}
\newcommand{\either}{\eta}
\newcommand{\instate}[1]{\psi_{#1}}
\newcommand{\instalo}[1]{\widetilde\psi_{#1}}
\newcommand{\midstate}[2]{\psi_{#1,#2}^{(d_1,\ldots,d_{#2})}}
\newcommand{\midrho}[2]{\rho_{#1,#2}}
\newcommand{\chan}[2]{{\cal E}_{#1,#2}}
\newcommand{\chant}[1]{{\cal E}_{#1}}
\newcommand{\measp}[2]{M_{#1,d_{#2},{\rm proj}}^{(d_1,\ldots,d_{{#2}-1})}}
\newcommand{\meas}[2]{M_{#1,d_{#2}}^{(d_1,\ldots,d_{{#2}-1})}}
\newcommand{\measc}[1]{M'_{#1,{\bf d}}}
\newcommand{\measb}[1]{M_{#1,{\bf b}}}
\newcommand{\measo}[2]{M_{#1,#2}}
\newcommand{\pr}[1]{p_{\bf b}^{#1}}
\newcommand{\row}[1]{\rho_{\bf b}^{#1}}
\newcommand{\trc}{\Tr_{\rm ctrl}}
\newcommand{\co}{\gamma}
\newcommand{\ca}{\xi}
\newcommand{\s}{s}
\newtheorem{theorem}{Theorem}[section]
\newtheorem{lemma}[theorem]{Lemma}
\newtheorem{definition}[theorem]{Definition}
\newtheorem{algorithm}[theorem]{Algorithm}
\begin{document}

\title{\textbf{Gate-efficient discrete simulations of continuous-time quantum query algorithms%
\thanks{Research supported by Canada's NSERC, CIFAR, MITACS, the U.S. ARO, and ARC grant FT100100761.}}}
\author{Dominic W. Berry\footnote{Department of Physics and Astronomy, Macquarie University, NSW 2109, Australia.}, Richard Cleve\footnote{David R. Cheriton School of Computer Science and Institute for Quantum Computing, University of Waterloo.}, and Sevag Gharibian$^{\ddagger}$\footnote{Electrical Engineering \& Computer Sciences, University of California
at Berkeley, USA.}}

\date{\today}
\maketitle

\begin{abstract}
We show how to efficiently simulate continuous-time quantum query algorithms that run in time
$T$ in a manner that preserves the query complexity (within a polylogarithmic factor)
while also incurring a small overhead cost in the total number of gates between queries.
By small overhead, we mean $T$ within a factor that is polylogarithmic in terms of $T$
and a cost measure that reflects the cost of computing the driving Hamiltonian.
This permits \textit{any} continuous-time quantum algorithm based on an efficiently
computable driving Hamiltonian to be converted into a gate-efficient algorithm with
similar running time.
\end{abstract}

%------------------------------------------------------------------------------%
% SECTION 1
%------------------------------------------------------------------------------%

\section{Introduction and Summary of Result}

The standard quantum query model can be represented as an oracle that performs the unitary operation
$\ket{j,k}\mapsto\ket{j,k \oplus x_j}$, where $x_1 x_2 \dots x_L \in \{0,1\}^L$ is the data, and $\oplus$ indicates modular addition.
A convenient representation of the oracle is given by removing the ancilla, and having the oracle give a phase shift,
so the unitary operation for the oracle, $Q$, acts as $Q\ket{j}=(-1)^{x_j}\ket{j}$.
The fractional query model is a natural variant of this, where the operation is $Q^\lambda\ket{j}=(-1)^{\lambda x_j}\ket{j}= e^{i \pi \lambda x_j} \ket{j}$, and $\lambda$
may be taken to be arbitrarily small (but positive).
In the fractional query model, each size-$\lambda$ query is taken to have cost $\lambda$.
The fractional query model potentially provides more power than the standard query model,
because additional unitary operations (which are independent of $x_j$) can be performed in between the fractional queries.

Informally, the continuous-time query model~\cite{FarhiG1998a} arises from the fractional query model in the limit as $\lambda$ approaches zero.
More formally, in the continuous-time query model, the oracle operation is replaced with an oracle Hamiltonian, $H_Q$, which acts as $H_Q\ket{j}=x_j\ket{j}$.
Evolving under this Hamiltonian for time $\pi$ would result in a full discrete query.
The additional operations are replaced with a \emph{driving Hamiltonian} independent of $x_j$, which we denote $H$ ($H$ may be time-dependent).
The algorithm then becomes Hamiltonian evolution with the sum of the oracle and driving Hamiltonians, and
the complexity is quantified by the time of evolution.
The continuous-time and fractional query models are equivalent in the sense that each can simulate the other (to any desired level of accuracy) with the same query cost.
For example, a continuous-time query algorithm can be approximated using fractional queries via a Lie-Trotter formula~\cite{CleveG+2009}.
The continuous-time and discrete query models are also effectively equivalent, in that one can convert from one to the
other with at most a polylogarithmic overhead in the query cost~\cite{CleveG+2009}.

Presently, we are concerned not just with the query cost, but with the cost in terms of the number of additional
gates and ancilla qubits needed.
We show that any continuous-time quantum query algorithm whose total query time
is $T$ and whose driving Hamiltonian is implementable with $G$ 1- and 2-qubit gates (in a sense defined in Section~\ref{sec:statement_main}) can be simulated by a discrete-query quantum algorithm using the
following resources:
\begin{itemize}
\item $O(T \log T / \log\log T)$ queries
\item $O(T G \,\log(T)+T\log^3(\|H\|T))$ 1- and 2-qubit gates [or $O(TG \log(T) + TG^3)$ in terms of just $T$ and $G$]
\item $O(\log^3(\|H\|T))$ qubits of space [or $O(G^3)$].
\end{itemize}
This extends the previous result~\cite{CleveG+2009} where the query cost is the same, but where
the orders of the second and third resource costs are at least $T^2\polylog T$ and $T \polylog T$
respectively.
The present result can also be compared with the result~\cite{LeeM+2011} where the query
cost is superior to ours, $O(T)$ (which is asymptotically optimal), but whose
methodology does not (as far as we know) yield an efficient gate construction from
an efficiently implementable driving Hamiltonian.

Another advantage of our result is that it provides an exponential improvement in the scaling (of the number of gates and ancilla qubits) with $\|H\|$ over that in \cite{CleveG+2009}.
Here the number of gates is polylogarithmic in $\|H\|$, whereas it is superlinear in $\|H\|$ in \cite{CleveG+2009}.
This is important, as the norm of the driving Hamiltonian can potentially be large.

%------------------------------------------------------------------------------%
% SECTION 2
%------------------------------------------------------------------------------%
\section{Significance to Quantum Computation}

The continuous-time query model is an important tool for designing algorithms, and for example yielded the algorithm for AND-OR tree evaluation~\cite{FarhiG+2007}.
The difficulty with continuous-time quantum algorithms is that, in order to implement them on quantum computers,
these abstract query algorithms need to be translated into concrete algorithms with subroutines substituted for the black-box queries\footnote{A query is typically \emph{not} something that could be physically implemented directly via continuous-time Hamiltonian evolution, as in an analog quantum computer.
A query corresponds to the coherent evaluation of a classical function on several qubits, and requires several quantum gates to implement, regardless of whether it is a full query or a fractional query.}.
In these circumstances, what matters is the total gate complexity, which can be
large if the cost of the operations performed between the queries is large, even if the
number of queries is small.
The contribution of our result is that it provides a \textit{systematic} way to obtain a gate-efficient
discrete-query algorithm from \textit{any} continuous-time query algorithm where the driving
Hamiltonian can be efficiently implemented.
That is, whenever the implementation cost of the driving Hamiltonian is small, the total gate complexity
is not much more than the query complexity times the cost of implementing each query.

Consider applying the continuous-time quantum algorithm in~\cite{FarhiG+2007} for AND-OR tree evaluation to evaluate expressions of the form
\begin{equation}
\exists x_1 \forall x_2 \exists x_3 \cdots \forall x_L f(x_1,x_2,\dots,x_L),
\end{equation}
where one is given a polynomial (in~$L$) size circuit implementation of
$f: \{0,1\}^L \rightarrow \{0,1\}$.
This corresponds to evaluating a balanced binary AND-OR tree of size $N = 2^L$.
A continuous-time query algorithm achieving time $O(\sqrt{N})$ cannot be simulated directly
from $f$, because a small $\lambda$-fractional query to $f$ cannot be computed at cost
proportional to $\lambda$; the algorithm must be efficiently translated into
the discrete-query framework to be implementable.
But if we substitute the parameters into the simulation in~\cite{CleveG+2009}, we obtain a gate
cost of order $N \polylog N$ (losing the square-root speedup) and consume
order $\sqrt{N} \polylog N$ qubits of space.
The simulation in~\cite{LeeM+2011} does not appear to yield any bounds less than $O(N)$ on the gate cost.
However, our present simulation results in $N^{1/2+ o(1)}$ gates and $O(\polylog N)$ space (using the fact that the driving Hamiltonian in~\cite{FarhiG+2007} can be
implemented with $N^{o(1)}$ gates).
We remark that, for this particular example, a better simulation that is specific to AND-OR
tree evaluation (that was discovered after \cite{FarhiG+2007}) is known~\cite{Childs2007,AmbainisC+2007}.

%------------------------------------------------------------------------------%
% SECTION 3
%------------------------------------------------------------------------------%
\section{Precise Statement of Main Result}\label{sec:statement_main}

Prior to stating our main result, we give a precise definition of the
\textit{implementation cost} of a Hamiltonian acting on $l$ qubits, which is the
cost of reali{\s}ing the unitary operation corresponding to evolution under the
Hamiltonian from a start time to a finish time.
A preliminary ideali{\s}ed definition is as a unitary operation with the following
properties.
It acts on three registers: a \textit{start time}, a \textit{finish time}
and an $l$-qubit \textit{state}.
For any start and finish times $t_s$ and $t_f$,
and any $l$-qubit state $\ket{\psi}$, the unitary operation maps
$\ket{t_s}\ket{t_f}\ket{\psi}$ to $\ket{t_s}\ket{t_f}\ket{\psi'}$, where
$\ket{\psi'}$ is the state that results when $\ket{\psi}$ evolves under $H$ from
time $t_s$ to time $t_f$.
Assuming that all three registers are finite-dimensional, this can be denoted as
a gate as in Fig.~\ref{fig:ham}.
We will not require the unitary to be implemented perfectly.
We  introduce a precision parameter $\varepsilon'$,
and permit the unitary evolution to be approximated within $\varepsilon'$.
This leads to the following definition.

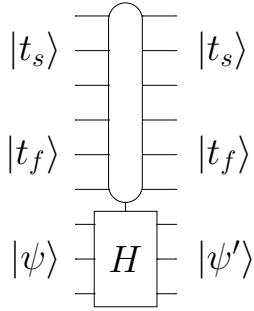
\begin{figure}
\centerline{\Qcircuit @C=0.65em @R=0.3em {
& \multimeasure{5}{\,} \qwx[6] & \qw \\
\lstick{\mbox{\Large $\ket{t_s}$}} & \ghost{\,} & \qw & \rstick{\mbox{\Large $\!\!\ket{t_s}$}}\\
& \ghost{\,} & \qw \\
& \ghost{\,} & \qw \\
\lstick{\mbox{\Large $\ket{t_f}$}} & \ghost{\,} & \qw & \rstick{\mbox{\Large $\!\!\ket{t_f}$}}\\
& \ghost{\,} & \qw \\
& \multigate{2}{\mbox{\Large $H$}} 	& \qw		\\
\lstick{\mbox{\Large $\ket{\psi}$}} & \ghost{\mbox{\Large $H$}} & \qw & \rstick{\mbox{\Large $\!\!\ket{\psi^\prime}$}}\\
& \ghost{\mbox{\Large $H$}} & \qw \\}}\vspace*{5mm}
\caption{Controlled evolution under Hamiltonian $H$, with start time $t_s$, finish time $t_f$, and target state $\ket{\psi}$.}
\label{fig:ham}
\end{figure}

\begin{definition}
Let $H$ be a Hamiltonian acting on $l$ qubits.
Define $H$ to be {\em implementable within precision $\varepsilon'$ with $G$ gates}
if the following unitary operation can be implemented within precision $\varepsilon'$
with $G$ elementary gates.
The unitary acts on three registers: a start time and finish time, and $l$ qubits set to the initial state.
The unitary maps $\ket{t_s}\ket{t_f}\ket{\psi}$ to $\ket{t_s}\ket{t_f}\ket{\psi'}$, where
$\ket{\psi'}$ is the state that results when $\ket{\psi}$ evolves under $H$ from
time $t_s$ to time $t_f$.
By approximating within $\varepsilon'$, we mean with respect to the completely bounded norm.
\end{definition}

 We are now ready to state our main result.

\begin{theorem}
\label{th:main}
{\rm\textbf{(Main)}}
Let $H(t)$ be a driving Hamiltonian that is approximately implementable within precision
$1/T$ using $G$ gates.
Then the continuous-time query algorithm can be simulated with constant error by a
discrete-query quantum algorithm
using the following resources:
\begin{itemize}
\item $O(T \log T / \log\log T)$ queries
\item $O(T G \,\log(T)+T\log^3(\|H\|T))$ 1- and 2-qubit gates
\item $O(\log^3(\|H\|T))$ qubits of space.
\end{itemize}
\end{theorem}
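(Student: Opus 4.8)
The plan is to retain the high-level architecture of~\cite{CleveG+2009}---cut $[0,T]$ into short segments and, within each, implement a \emph{truncated} linear combination over which queries are ``active''---but to realize each segment with gate and space costs that scale in $G$ and $\polylog(\|H\|T)$ rather than in a fine Lie--Trotter mesh of $H$. First I would split $[0,T]$ into $O(T)$ segments each of length at most a small constant $w_0$; since $\|H_Q\|\le 1$, the evolution of one segment under $H_Q+H(t)$ lies within $w_0$ of the identity. On a segment I would expand this evolution in powers of the query Hamiltonian $H_Q$ (a time-ordered Dyson series): the order-$j$ term integrates, over ordered times $t_1\le\dots\le t_j$ in the segment, a product of driving-Hamiltonian evolutions separated by $j$ copies of $H_Q$, and since $H_Q=(I-Q)/2$ with $Q$ the discrete query this splits further into $2^j$ terms carrying between $0$ and $j$ actual queries. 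Truncating at order $k=O(\log T/\log\log T)$ leaves remainder $O(w_0^{k+1}/(k+1)!)=O(1/T)$, so the truncation errors sum to a constant over the $O(T)$ segments.

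\textbf{Decoupling the cost from $\|H\|$.} The step I expect to be the main obstacle---and the source of the exponential improvement over~\cite{CleveG+2009}---is arranging that a segment is implemented with only $O(k)$ costly operations and $\polylog(\|H\|T)$ qubits. Two points make this possible. First, each of the $O(k)$ stretches of driving evolution inside a segment is applied in a single shot, for the \emph{variable} continuous duration between two consecutive active query times, using the implementable-Hamiltonian unitary of the hypothesis---whose start-time and finish-time registers exist precisely for this---so $H$ is never re-expanded and each segment makes only $O(k)$ calls of $O(G)$ gates each (the slightly finer precision $1/\poly(T)$ needed here changes $G$ only by lower-order terms). Second, the continuum of query times within a length-$O(1)$ segment is discretized to a grid of spacing $1/\poly(\|H\|T)$; bounding the derivative of each Dyson integrand by $O(\|H\|)$ keeps the Riemann-sum error per segment at $O(1/T)$, while a grid point is named by $O(\log(\|H\|T))$ bits, so the register recording ``which queries are active, and when'' has $O(k\log(\|H\|T))=O(\log T\log(\|H\|T))$ qubits instead of $\poly(\|H\|T)$.

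\textbf{Implementing one segment and assembling.} Each segment is then a linear combination of $O(2^k)$ unitaries, each using at most $k$ queries, so I would use the select/prepare pattern: on the control register, prepare the superposition weighted by the truncated and discretized Dyson coefficients (over truncation order, the ordered tuple of grid points, and the sign pattern from the $(I-Q)/2$ expansions); then apply the select operation which, slot by slot, applies controlled driving evolution over the gap to the previous active time---computing those gaps coherently costs $O(\polylog(\|H\|T))$ gates per slot---followed by a controlled $Q$. Because the targeted sum equals the near-unitary segment evolution and $\sum_i|\alpha_i|\le e^{w_0}=O(1)$, the success branch has amplitude $\Theta(1)$, so a constant number of rounds of the amplitude-amplification step of~\cite{CleveG+2009} boosts it to $1-O(1/T)$. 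Per segment this costs $O(k)$ queries, $O(kG+k\polylog(\|H\|T))$ one- and two-qubit gates, and $O(\polylog(\|H\|T))$ qubits, which are reused across the sequentially executed segments. Summing over the $O(T)$ segments yields $O(T\log T/\log\log T)$ queries, $O(TG\log T+T\log^3(\|H\|T))$ gates, and $O(\log^3(\|H\|T))$ space; the various error sources---the initial reduction, per-segment truncation, time discretization, amplitude amplification, and the implementations of the driving unitary---each contribute $O(1)$ in total. The ``$T$ and $G$'' bounds follow by substituting, for a Hamiltonian implementable with $G$ gates, the available bound relating $\|H\|$ to $G$, which turns $\log^3(\|H\|T)$ into $O(G^3+\polylog T)$.
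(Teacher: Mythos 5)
Your proposal takes a genuinely different route from the paper. The paper does not re-derive the segment evolution as a Dyson series; it keeps the CGMSY circuit of Fig.~\ref{fig:const} literally intact (control qubits in $(\alpha\ket{0}+i\beta\ket{1})^{\otimes m}$, controlled queries, final $R^{\otimes m}$-basis measurement, undo-and-redo correction via a biased random walk) and obtains the savings purely by storing the $m$ control qubits in a compressed encoding $C^k_m$ that records the positions of the ones in $O(k\log m)$ qubits, with $\log m=O(\log(\lVert H\rVert T))$. The technical content is then (i) preparing the compressed version of $R^{\otimes m}\ket{0^m}$ with $\polylog$ gates via exponential-superposition states $\ket{\phi_q}$ and an encoding conversion, and (ii) simulating the final product measurement \emph{in compressed form} by a recursive $\ket{0^n}$-vs-$\ket{0^n}^\perp$ bisection (Alg.~\ref{alg:measure}), with an error analysis showing the compressed and uncompressed amplitudes track each other. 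Your linear-combination-of-unitaries formulation of the truncated, time-discretized Dyson expansion is mathematically the same object (the $x$ with $|x|\le k$ in the compressed control register are exactly your ordered tuples of active query times), but implemented by prepare/select/unprepare plus amplification rather than by measurement plus correction; that route is viable and yields a cleaner picture that avoids the recursive compressed measurement entirely.

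Two steps in your write-up, however, are exactly where the real work lies and are not supplied. First, the ``prepare'' step: you assert that the superposition over truncation order, ordered grid-time tuples, and sign patterns, weighted by the discretized Dyson coefficients, can be prepared in $\polylog(\lVert H\rVert T)$ gates, but this is the analogue of what Sections~\ref{sec:initialization} and~\ref{seccleanup} spend most of their effort on (one cannot build the state coordinate-by-coordinate over $m$ grid points, and the natural product construction produces the wrong encoding, which must be coherently cleaned up with an attendant $\ket{\phi_{q-t}}$-vs-$\ket{\phi_q}$ error). Second, the amplification: the correction step of \cite{CleveG+2009} is not amplitude amplification, and it works only because each failed control-qubit measurement leaves a \emph{known unitary} error (of the form $\cos\phi\, I - i\sin\phi\, Q$) on the target, which can be inverted and retried. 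The failure branch of your prepare/select/unprepare operator is a non-unitary Kraus operator acting on an \emph{unknown} input state, so neither the undo-redo walk nor standard amplitude amplification applies; you need oblivious amplitude amplification, robust to the truncated sum being only $O(1/T)$-close to unitary, with $\sum_i|\alpha_i|$ tuned to the amplifiable value (e.g.\ padding to $2$) rather than the $e^{w_0}\approx 1$ you quote. Both ingredients exist in later literature, so your outline can be completed, but as written these are genuine gaps rather than routine omissions.
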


In particular, when $G$ is $\mbox{polylog}(T)$, this is
$\tilde{O}(T)$ queries,
$\tilde{O}(T)$ 1- and 2-qubit gates, and
$\mbox{polylog}(T)$ qubits of space.
The norm $\|H\|$ is taken to be $\|H\| := \sup_{t\in[0,T]} \|H(t)\|$ for time-dependent $H(t)$.
Because the gate complexity scales linearly in $G$, we require the driving Hamiltonian to be simulatable efficiently in order for the simulation to be gate-efficient.
If, for example, $G$ scaled linearly in $\|H\|$, then the gate complexity would be linear in $\|H\|T$, which is similar to the complexity obtained by product formulae \cite{Berry2007}.
On the other hand, we have a lower bound of $G=\Omega(\log(\|H\|T))$ (see Section \ref{sec:m}).
As a result, we could express the gate complexity as $O(TG\log(T)+TG^3)$, and the number of qubits of space as $O(G^3)$.

The remaining sections explain our algorithm, with the proof of Theorem \ref{th:main} in Section \ref{sec:proof}.

%------------------------------------------------------------------------------%
% SECTION 4
%------------------------------------------------------------------------------%
\section{Compressed CGMSY Construction}\label{sec:sketch}

We will summari{\s}e the construction in~\cite{CleveG+2009}, and then show how to
make it more efficient by compressing the control registers. Before doing so, we state the notation used throughout this paper.

\paragraph{Notation.} We denote the set of linear operators acting on complex Euclidean space $\spa{X}$ as $\LL(\spa{X})$.
The spectral norm of operator $A$ is $\snorm{A} := \max\{\norm{A\ket{v}}_2 : \norm{\ket{v}}_2 = 1\}$.
The norm of time dependent operator $A(t)$ is given by $\|A\| = \sup_{t} \|A(t)\|$.
The completely bounded norm, or diamond norm, of superoperator $\Phi: \LL(\spa{X})\mapsto \LL(\spa{Y})$ is defined as
$\norm{\Phi}_\diamondsuit = \norm{\Phi\otimes I_{\LL(\spa{X})}}_1$,
where the superoperator trace-norm is given by $\norm{\Phi}_1=\max\set{\trnorm{\Phi(X)}\mid X\in\LL(\spa{X}), \trnorm{X}\leq 1}$.
All logarithms are taken to base $2$.
We define $[m]:=\set{1,\ldots,m}$.
The tensor product of many zero computational basis states will be represented in compact form as $\ket{0^\ell}:=\ket{0}^{\otimes \ell}$.

%------------------------------------------------------------------------------%
% SECTION 4.1
%------------------------------------------------------------------------------%
\subsection{Overview of the CGMSY Construction~\cite{CleveG+2009}}
\label{sec:over}
Our result is obtained by simulating the construction
in~\cite{CleveG+2009}, but by representing some of the qubits in a highly compressed form.
This compressed form was known by the authors of~\cite{CleveG+2009}, but it was
not known that all of the steps of the construction can be carried out within the compressed
form---especially the \textit{measurement of control qubits}.

The construction in~\cite{CleveG+2009} begins with a continuous-time query algorithm with total query cost $T$.
The overall Hamiltonian for the continuous-time query algorithm is a sum of the oracle Hamiltonian and the driving Hamiltonian,
so the evolution can be approximated via a Lie-Trotter decomposition.
As above, it is assumed that the driving Hamiltonian can be simulated, and the evolution under the oracle Hamiltonian for a short
time becomes a fractional-time query.

The total time $T$ is partitioned into segments corresponding to time intervals of the form
$[t_0, t_0 + 1/4]$, and with $m$ of the Lie-Trotter time intervals within each segment.
We call each length $1/4$ time interval a \emph{segment}, to distinguish them from other time intervals considered.
In each of the Lie-Trotter time intervals there is a fractional query of size $1/4m$.
Here, $m$ can be chosen as a power of two without loss of generality; we henceforth assume this is the case.
In this work we consider the simulation of each of these segments.

Within each segment, there are $m$ fractional queries which we wish to simulate.
The method in~\cite{CleveG+2009} is to then, for each fractional query, use a \emph{control} qubit that is in the state $\alpha\ket{0}+i\beta\ket{1}$.
The unitary operation for the discrete oracle, $Q$, is then implemented, controlled by the control qubit.
Given that the \emph{target} system is initially in state $\ket{\zeta}$, the state after this controlled operation is
\begin{equation}
\alpha\ket{0}\otimes\ket{\zeta}+i\beta\ket{1}\otimes Q\ket{\zeta}.
\end{equation}
Finally, a projection measurement with outcome $\alpha\ket{0}+\beta\ket{1}$ yields the state in the target system (omitting normalisation)
\begin{equation}
\alpha^2\ket{\zeta}+i\beta^2 Q\ket{\zeta}.
\end{equation}

The query Hamiltonian, $H_Q$, has values on the diagonal equal to $x_j$,
whereas the discrete query unitary $Q$ has values on the diagonal of $(-1)^{x_j}=1-2x_j$.
Therefore the Hamiltonian and unitary are related by $H_Q=(I-Q)/2$.
The $I$ only gives a global phase factor and can be ignored.
Because $Q$ is self-inverse, one obtains (omitting the phase factor)
\begin{equation}
e^{-iH_Q t} = \cos(t/2) I + i\sin(t/2)Q.
\end{equation}
With $t=1/4m$, one therefore obtains the correct operation via the above procedure if $\beta\approx 1/\sqrt{8m}$.

The number of calls to the oracle can then be reduced by, instead of considering controlled operations at each time step individually, considering them jointly within a segment.
That is, considering the state of all control qubits together, for a given basis state the position of each $1$ gives a time that $Q$ is applied.
As the only basis states with significant weighting are those with a small number of ones, we can allow a maximum number $k' \in O(\log(T)/\log\log(T))$
of applications of the oracle, with evolution under the driving Hamiltonian between them.
That is, the positions of the ones in the control qubits control the time of evolution under the driving Hamiltonian.

\begin{figure}
\centerline{\Qcircuit @C=0.65em @R=0.3em {
& \lstick{\ket{0}} & \gate{R} & \gate{P} & \multigate{6}{V_1} & \qw & \multigate{6}{V_2} & \qw
& \multigate{6}{V_3} & \qw & \multigate{6}{V_{k'}} & \qw &\gate{R} & \meter &  \rstick{b_1} \cw \\
& \lstick{\ket{0}} & \gate{R} & \gate{P} & \ghost{V_1} & \qw & \ghost{V_2} & \qw & \ghost{V_3} & \qw & \ghost{V_{k'}} & \qw &\gate{R} & \meter & \rstick{b_2} \cw \\
& \lstick{\ket{0}} & \gate{R} & \gate{P} & \ghost{V_1} & \qw & \ghost{V_2} & \qw & \ghost{V_3} & \qw & \ghost{V_{k'}} & \qw & \gate{R} & \meter &  \rstick{b_3} \cw \\
& \lstick{\ket{0}} & \gate{R} & \gate{P} & \ghost{V_1} & \qw & \ghost{V_2} & \qw & \ghost{V_3} & \qw & \ghost{V_{k'}} & \qw & \gate{R} & \meter &  \rstick{b_m} \cw \\
& & \qw & \qw & \ghost{V_1} & \multigate{2}{Q} & \ghost{V_2} & \multigate{2}{Q} & \ghost{V_3} & \multigate{2}{Q} & \ghost{V_{k'}} & \multigate{2}{Q} &\qw & \qw & \qw\\
& & \qw & \qw & \ghost{V_1} & \ghost{Q} & \ghost{V_2} & \ghost{Q} & \ghost{V_3} & \ghost{Q} & \ghost{V_{k'}} & \ghost{Q} &\qw & \qw & \qw\\
& & \qw & \qw & \ghost{V_1} & \ghost{Q} & \ghost{V_2} & \ghost{Q} & \ghost{V_3} & \ghost{Q} & \ghost{V_{k'}} & \ghost{Q} &\qw & \qw & \qw\\
}}
\caption{The construction from Ref.~\cite{CleveG+2009} to simulate a segment corresponding to a time interval of length $1/4$.}
\label{fig:const}
\end{figure}
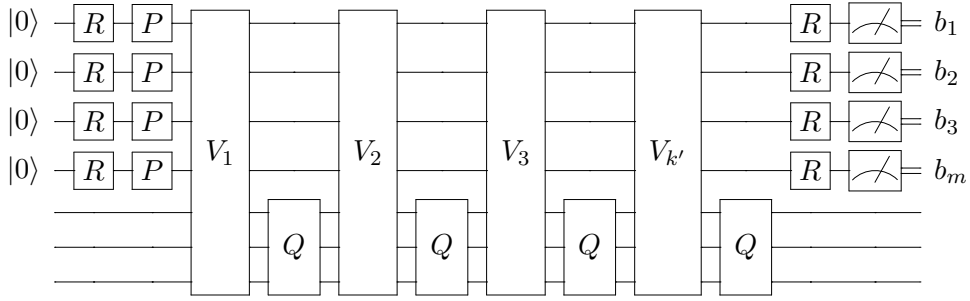

This procedure from~\cite{CleveG+2009} is represented in Fig.~\ref{fig:const}.
The operations $P$ and $R$ are designed to prepare the initial qubits and enable the final measurement, and are given by
\begin{equation}
P =
\left(
\begin{array}{lr}
1 & 0 \\
0 & i
\end{array}
\right),
\ \ \ \ \
R =
\left(
\begin{array}{lr}
\alpha & \beta \\
\beta & -\alpha
\end{array}
\right) \ \ \ \ \ \mbox{with $\beta \approx 1/\sqrt{8m}$.}
\end{equation}
The sequence of operations $PR$ acting on $\ket{0}$ prepares $\alpha\ket{0}+i\beta\ket{1}$,
and $R$ followed by a computational basis measurement of $b_j=0$ corresponds to the desired
measurement outcome $\alpha\ket{0}+\beta\ket{1}$.
The gates $V_1, \dots, V_{k'}$ are the unitaries corresponding to evolving the driving
Hamiltonian for various time intervals specified by the control qubits:
$V_1$ for the time interval from $t_0$ to the position of the first one in the control qubits;
$V_2$ for the time interval delineated by the positions of the first and second ones in the control qubits;
and so on.
The simulation is successful if $b_1 = \dots = b_m = 0$.
The probability of obtaining each $b_j=0$ is $\ge 1-1/4m$, and there are $m$ measurements,
so the probability of successful simulation is $\ge 3/4$.
The value $\beta^2 \approx 1/8m$ corresponds to a time interval of $1/4$.
This time interval is chosen to ensure that the success probability is $\ge 3/4$.

In the case that the simulation is not successful, there are errors at times corresponding to the $b_j$ that are equal to 1.
Reference \cite{CleveG+2009} shows how to correct unsuccessful instances.
Since the errors are unitary operations, it is possible to undo the step that has just been performed, then redo it.
To undo the step, one inverts the construction given in Fig.~\ref{fig:const}, but with each of the errors inverted.
This inversion will also succeed with probability $\ge 3/4$.
If this inversion does not succeed, then one attempts to undo it and then redo it, and so forth.
This procedure corresponds to a biased random walk, where a step to the right (corresponding to a success) occurs with
probability $\ge 3/4$, and a step to the left (corresponding to a failure) occurs with probability $\le 1/4$.
Overall success for this random walk is obtained when it advances one step to the right of its initial position.

That analysis continues to hold here without modification.
The only subtlety is that we also need to account for the number of gates needed to perform the gates $V_j$.
Each gate may need to be divided into a number of parts corresponding to the number of errors (ones) found.
It is shown in Ref.~\cite{CleveG+2009} that the average number of ones is $O(1)$, so the \emph{average} number of oracle queries is at most multiplied by a constant factor.
Moreover, if the total number of oracle queries permitted is bounded by 
$O(1/\varepsilon_{\rm tot})$ times the average value, then by the Markov bound, the probability is at least $1-O(\varepsilon_{\rm tot})$ that the overall correction procedure terminates within this bound \cite{CleveG+2009}.
Failure to terminate within the bound can be included in the $\varepsilon_{\rm tot}$ allowable error.
For the main result in Theorem \ref{th:main}, constant error is considered, so this does not alter the result.

When analysing the complexity due to correcting unsuccessful instances,
another factor that needs to be considered is the additional complexity due to correcting the individual errors.
The average of this complexity was denoted $C_0$ in Ref.~\cite{CleveG+2009}, but an upper bound was not considered.
As before, an upper bound equal to $O(1/\varepsilon_{\rm tot})$ times the average value will not be exceeded with probability $1-O(\varepsilon_{\rm tot})$.
Again this does not affect the result in Theorem \ref{th:main} as constant error is considered.
As a result of these considerations, when taking into account the corrections, the number of oracle queries and the number of additional 1- and 2-qubit gates are at most multiplied by a constant factor.
This means that the correction operations do not alter the scaling, and we do not need to consider them further.

The feature of the analysis in~\cite{CleveG+2009} that is most crucial for this work is that
the state of the control registers
$R^{\otimes m}\ket{0^m} = (\alpha\ket{0} + \beta\ket{1})^{\otimes m}$
is highly ``compressible" in that most of its amplitude is concentrated on basis states with
low Hamming weight.
A natural succinct representation of this state is in terms of the positions of the
ones in binary.
We first define such a succinct form precisely (Section~\ref{ssec:succinct}).
We then show how the above circuit can be simulated with the control qubits in their
succinct form in these three stages: the initial stage (Section~\ref{sec:initialization}), which is the construction of the
state $R^{\otimes m}\ket{0^m}$; the intermediate stage (Sections~\ref{seccleanup} and \ref{secphase}), where $P^{\otimes m}$ is
applied to the control qubits and then the queries and
driving operations occur; and the final stage (Section~\ref{ssec:msmt}), which is where the control qubits are
measured with respect to the basis $\{R^{\otimes m}\ket{x} : x \in \{0,1\}^m\}.$

%------------------------------------------------------------------------------%
% SECTION 4.2
%------------------------------------------------------------------------------%
\subsection{Succinct Representation of Control Qubits}\label{ssec:succinct}

We now propose a succinct encoding scheme which accurately reproduces low Hamming weight basis states.
Specifically, consider the set of all $m$-bit strings whose Hamming weight is at most $k+1$, where
$k$ is much smaller than $m$.
The size of this set is bounded above by $(m+1)^{k+1}$.
Our encoding scheme utilizes a set of size $(m+1)^{k+1}$ strings to accurately represent this space as follows.
We use the notation $\abs{x}$ to denote the Hamming weight of $x\in\set{0,1}^*$.
The value of $k$ is chosen to ensure that the error due to omitting high Hamming weight components is no more than $\varepsilon$, and therefore can be taken as
\begin{equation}
\label{eq:kchoice}
k = \Theta\left( \frac{\log(1/\varepsilon)}{\log\log(1/\varepsilon)} \right).
\end{equation}

We also use a slightly smaller value $k'$ to ensure that the error is no more than $\varepsilon'$; the relation between these primed variables is identical.
The Hamming weight cutoff $k$ is used to limit errors that occur repeatedly in the compressed measurement protocol.
In contrast, Hamming weight cutoff $k'$ is used to limit errors that only occur once.
In particular, we limit the total number of controlled oracle calls to $k'$, because the error due to limiting the Hamming weight there only occurs once.
We also limit the number of ones that are measured to $k'$.
The Hamming weight cutoff $k$ is used in our compressed encoding, as the error due to this cutoff will contribute multiple times.

\begin{definition}\label{def:encodingscheme}
Define the encoding scheme $C^{k}_{m}$ on $\ket{x}$ for $x \in \{0,1\}^m$, $|x| \le k$
as follows.
For $x = 0^{s_1}10^{s_2}10^{s_3}\dots 0^{s_h}10^{t}$, where $h:=|x|$, $h \le k+1$ and
$t = m - s_1 - \cdots - s_h - h$,
\begin{equation}
C^{k}_{m}\ket{x} = \ket{s_1,s_2,\dots,s_h,\underbrace{m,\dots,m}_{k+1-h}},\label{eq:Cencode}
\end{equation}
where $C^{k}_{m}\ket{x}\in(\complex^{m+1})^{\otimes k+1}$.
For $h>k+1$, $C^{k}_{m}\ket{x}$ encodes the positions of the first $k+1$ ones.
\end{definition}

%------------------------------------------------------------------------------%
% SECTION 4.3
%------------------------------------------------------------------------------%
\subsection{Initialization of Control Qubits in Alternative Encoding}\label{sec:initialization}

We now show how to simulate the preparation of the state after operation $R^{\otimes m}$
(but before $P^{\otimes m}$, which is deferred to Section~\ref{secphase}) in succinct form using the encoding of Definition~\ref{def:encodingscheme}.
To begin, in the original circuit there are $m$ control qubits, whose state is initiali{\s}ed to
$(\alpha\ket{0}+\beta\ket{1})^{\otimes m}$,
where $\beta \in \Theta(1/\sqrt{m})$.
The amplitudes of terms in this superposition decrease factorially with Hamming weight, and in particular, one can write
\begin{align}
(\alpha\ket{0}+\beta\ket{1})^{\otimes m} & = \sum_{x \in \{0,1\}^m} \alpha^{m - |x|}\beta^{|x|} \ket{x} \nonumber \\
& = \sum_{{x \in \{0,1\}^m} \atop {|x| \le k}} \alpha^{m - |x|}\beta^{|x|} \ket{x}
+ \sum_{{x \in \{0,1\}^m} \atop {|x| > k}} \alpha^{m - |x|}\beta^{|x|} \ket{x} \nonumber \\
& = \sum_{{x \in \{0,1\}^m} \atop {|x| \le k}} \alpha^{m - |x|}\beta^{|x|} \ket{x}
+ \mu\ket{\nu}, \label{eq:form}
\end{align}
where, on the last line, $\ket{\nu}$ is orthogonal to
every basis state in the sum that precedes it, and $\mu^2 \in 1/2^{O(k)}k!$.
Using $k$ as in Eq.\ \eqref{eq:kchoice}, $\mu^2\in O(\varepsilon)$.

Using our encoding scheme $C^{k}_{m}$, we aim to prepare a state of the form
\begin{equation}\label{eq:succinct}
\sum_{{x \in \{0,1\}^m} \atop {|x| \le k}} \alpha^{m - |x|}\beta^{|x|} C^{k}_{m}\ket{x}
+ \mu\ket{\nu^\prime},
\end{equation}
for some $\ket{\nu^\prime} \in (\complex^{m+1})^{\otimes k+1}$ orthogonal
to all the kets arising in the sum.
This state has the correct amplitudes for all encodings of strings with Hamming weights up to $k$.
Because we choose $k$ such that $\mu^2\in O(\varepsilon)$, the error in the encoding is $O(\varepsilon)$.
%The state $\ket{\nu^\prime}$ will contain encodings of strings with Hamming weights $k+1$, but not with the correct amplitudes.

We now show how to construct an approximation (within distance $\varepsilon$) of the state in Eq.~(\ref{eq:succinct}) using $\poly(k,\log m)$ gates.
Note that, to accomplish this, we must avoid any approach based on first constructing
the expanded state in Eq.~(\ref{eq:form}) then applying $C^{k}_{m}$,
since this would immediately entail order $m$ gates.
Our efficient approach is to first prepare a state similar to Eq.~(\ref{eq:form}) using a slightly different encoding scheme than $C^k_m$, denoted $B^k_q$.
We then postprocess the state so that the encoding is changed from $B^k_q$ to $C^k_m$ [i.e.\ Eq.~(\ref{eq:succinct})].

We now introduce the encoding $B^k_q$ by explicit construction. Specifically, it is based on the \textit{exponential superposition} state
\begin{equation}\label{eq:exponential}
\ket{\phi_{q}} :=
\sum_{s=0}^{q-1}\beta\alpha^s\ket{s} + \alpha^q\ket{q}.
\end{equation}
%where $q=2^r$ is a power of two that is larger than $m$.
The state $\ket{\phi_{q}}$ is very simple to prepare when $q = 2^r$, as follows.
Define the unitary matrix
\begin{equation}\label{eq:M}
M(\gamma) :=
\frac{1}{\sqrt{1+\gamma^2}}\left(
\begin{array}{lr}
1 & -\gamma \\
\gamma & 1
\end{array}
\right).
\end{equation}
Note that
\begin{align}
&M(\alpha^{2^{r-1}})\otimes \cdots \otimes M(\alpha^2)\otimes M(\alpha)
\ket{0^{ r}} \nonumber \\
& \quad = \frac{\beta}{\sqrt{1-\alpha^{2q}}}
\left(\ket{0\dots 00}
+ \alpha\ket{0\dots 01}
+ \alpha^2\ket{0\dots 10}
+ \cdots
+ \alpha^{q-1}\ket{1 \dots 11}\right) \ \ \ \ \nonumber \\
& \quad = \frac{1}{\sqrt{1-\alpha^{2q}}}\sum_{s=0}^{q-1} \beta\alpha^s\ket{s}.
\end{align}
Therefore, a circuit that maps $\ket{0^{r+1}}$ to $\ket{\phi_q}$ can be
obtained by first applying a one-qubit gate on the first qubit to put it in state
$\sqrt{1-\alpha^{2q}}\ket{0} + \alpha^q\ket{1}$, and then applying a sequence of controlled-$M(\alpha^{2^j})$ gates (each controlled by the first qubit being in state $\ket{0}$) to create the state
\begin{align}
& \beta\ket{0}\left(\ket{0\dots 00}
+ \alpha\ket{0\dots 01}
+ \alpha^2\ket{0\dots 10}
+ \cdots
+ \alpha^{q-1}\ket{1 \dots 11}\right) + \alpha^q\ket{1}\ket{0\cdots 00}
\nonumber \\
& \quad = \beta\left(\ket{00\dots 00}
+ \alpha\ket{00\dots 01}
+ \alpha^2\ket{00\dots 10}
+ \cdots
+ \alpha^{q-1}\ket{01 \dots 11}\right) + \alpha^q\ket{10\cdots 00}
\nonumber \\
& \quad = \ket{\phi_q}.
\end{align}

The reason why state $\ket{\phi_q}$ is useful is because, for $q \ge m$, $\ket{\phi_q}^{\otimes k+1}$
yields a state similar to Eq.~\eqref{eq:succinct}.
The encoding, which we will call $B^{k}_{q}\ket{x}$, is slightly different than $C^{k}_{m}\ket{x}$,
but can be efficiently translated into $C^{k}_{m}\ket{x}$ with some ``clean-up"
operations. Specifically, the encoding is as defined below.
%------------------------------------------------------------------------------%
% DEFINITION 3
%------------------------------------------------------------------------------%
\begin{definition}\label{def:encodeB}
Define the encoding scheme $B^{k}_{q}$ on $\ket{x}$ for $x \in \{0,1\}^m$, $|x| \le k$
as follows.
For $x = 0^{s_1}10^{s_2}10^{s_3}\dots 0^{s_h}10^{t}$, where $h:=|x|$, $h \le k$ and
$t = m - s_1 - \cdots - s_h - h$,
\begin{equation}
B^{k}_{q}\ket{x}
= \ket{s_1,\dots,s_h}\left(
\sum_{j=0}^{q-t-1}\alpha^{j}\beta\ket{j+t}
+ \alpha^{q-t}\ket{q}\right)\ket{\phi_q}^{\otimes k-h},\label{eq:Bencode}
\end{equation}
where $B^{k}_{q}\ket{x}\in(\complex^{q+1})^{\otimes k+1}$.
\end{definition}
The state is then given as in the following theorem.
%------------------------------------------------------------------------------%
% THEOREM 2
%------------------------------------------------------------------------------%
\begin{theorem}\label{thm:encodeB}
For $q \ge m$, the states $B^{k}_{q}\ket{x}$ for $|x| \le k$ are orthonormal
and
\begin{equation}\label{eq:offset}
\ket{\phi_q}^{\otimes k+1}
= \sum_{{x \in \{0,1\}^m} \atop {|x| \le k}} \alpha^{m - |x|}\beta^{|x|} B^{k}_{q}\ket{x}
+ \mu\ket{\nu'},
\end{equation}
for some
$\ket{\nu'}$ orthogonal to all $B^{k}_{q}\ket{x}$ for $|x| \le k$.
\end{theorem}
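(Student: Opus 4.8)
The plan is to recognise the first sum on the right-hand side of \eqref{eq:offset} as the orthogonal projection of $\ket{\phi_q}^{\otimes k+1}$ onto the span of $\set{B^{k}_{q}\ket{x} : x\in\set{0,1}^m,\ |x|\le k}$. With this viewpoint the theorem splits into two ingredients: (i) the encoded states $B^{k}_{q}\ket{x}$ form an orthonormal set, and (ii) a single overlap evaluation, $\langle B^{k}_{q}x\,|\,\phi_q^{\otimes k+1}\rangle=\alpha^{m-|x|}\beta^{|x|}$ for every $x$ with $|x|\le k$.

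I would first handle orthonormality. Normalisation is immediate once one observes that $B^{k}_{q}\ket{x}$ is a tensor product of the basis states $\ket{s_1},\dots,\ket{s_h}$, the ``offset'' factor $\ket{\chi_t}:=\sum_{j=0}^{q-t-1}\alpha^j\beta\ket{j+t}+\alpha^{q-t}\ket{q}$ from \eqref{eq:Bencode}, and $k-h$ copies of $\ket{\phi_q}$; each factor is a unit vector by the same geometric-series identity (using $1-\alpha^2=\beta^2$) that makes $\ket{\phi_q}$ a unit vector. For orthogonality of $B^{k}_{q}\ket{x}$ and $B^{k}_{q}\ket{x'}$ with $x\ne x'$, take $h=|x|\le h'=|x'|$ without loss of generality, let $s_1,\dots,s_h$ and $s'_1,\dots,s'_{h'}$ be the gap sequences, and let $t$ be the trailing-zero count of $x$. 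If $(s_1,\dots,s_h)\ne(s'_1,\dots,s'_h)$, the two states disagree on one of the first $h$ computational-basis registers and the overlap is $0$. Otherwise $h<h'$, since agreeing gap prefixes of equal length force $t=t'$ and hence $x=x'$; the one substantive point is that, because $x'$ is a valid $m$-bit string, its $(h{+}1)$-st one lies at position $(m-t)+s'_{h+1}+1\le m$, so $s'_{h+1}\le t-1$. Now register $h{+}1$ of $B^{k}_{q}\ket{x}$ is the offset factor $\ket{\chi_t}$, which is supported on $\set{t,t+1,\dots,q}$, whereas register $h{+}1$ of $B^{k}_{q}\ket{x'}$ is the basis state $\ket{s'_{h+1}}$ with $s'_{h+1}<t$; hence the overlap vanishes again.

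Given orthonormality, the orthogonal projection of any state $\ket{\psi}$ onto the span of the $B^{k}_{q}\ket{x}$ equals $\sum_{|x|\le k}\langle B^{k}_{q}x\,|\,\psi\rangle\,B^{k}_{q}\ket{x}$, so defining $\mu\ket{\nu'}$ to be $\ket{\phi_q}^{\otimes k+1}$ minus the sum in \eqref{eq:offset} makes $\ket{\nu'}$ orthogonal to every $B^{k}_{q}\ket{x}$ automatically. It then remains only to compute $\langle B^{k}_{q}x\,|\,\phi_q^{\otimes k+1}\rangle$. Since both $B^{k}_{q}\ket{x}$ and $\ket{\phi_q}^{\otimes k+1}$ are tensor products aligned register by register (and all amplitudes are real), the overlap factorises as $\big(\prod_{i=1}^h\langle s_i\,|\,\phi_q\rangle\big)\,\langle\chi_t\,|\,\phi_q\rangle$. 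A valid $m$-bit string with $h$ ones has each $s_i\le m-h<q$ (here I use $q\ge m$), so $\langle s_i\,|\,\phi_q\rangle=\beta\alpha^{s_i}$; and a one-line geometric-series evaluation gives $\langle\chi_t\,|\,\phi_q\rangle=\alpha^{t}$. Multiplying and using $s_1+\cdots+s_h+t=m-h$ (the definition of $t$), the overlap is $\beta^{h}\alpha^{s_1+\cdots+s_h+t}=\alpha^{m-h}\beta^{h}=\alpha^{m-|x|}\beta^{|x|}$, as claimed; taking $\mu$ to be the norm of the residual and applying the Pythagorean identity gives $\mu^2=1-\sum_{|x|\le k}\alpha^{2(m-|x|)}\beta^{2|x|}$, which is exactly the $\mu$ of \eqref{eq:form}.

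The step I expect to be the main obstacle is the orthogonality argument, specifically the observation that two encodings sharing a gap prefix force the extra one of the longer string into the trailing-zero block of the shorter string — this is precisely what makes the offset factor $\ket{\chi_t}$ orthogonal to the corresponding basis state. Everything else is routine bookkeeping with $1-\alpha^2=\beta^2$ and with the arithmetic relating gaps, trailing zeros, and $m$.
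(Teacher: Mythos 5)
Your proof is correct, but it is organised differently from the paper's. The paper expands $\ket{\phi_q}^{\otimes k+1}$ in the computational basis and partitions the basis states $\ket{s_1,\dots,s_{k+1}}$ into equivalence classes $P_x$, one per prefix $x\in\set{0,1}^m$ with $\abs{x}\le k$ plus a remainder class; summing the amplitudes within $P_x$ reproduces $\alpha^{m-\abs{x}}\beta^{\abs{x}}B^{k}_{q}\ket{x}$, and the mutual orthogonality of the $B^{k}_{q}\ket{x}$ and of $\ket{\nu'}$ comes for free because distinct classes are supported on disjoint sets of basis states. You instead prove orthonormality directly by a register-by-register analysis --- the substantive point being that when two encodings share a gap prefix of length $h=\abs{x}<\abs{x'}$, the constraint that $x'$ fit in $m$ bits forces $s'_{h+1}<t$, so the basis state $\ket{s'_{h+1}}$ misses the support $\set{t,\dots,q}$ of the offset factor --- and then obtain the decomposition as an orthogonal projection from the single overlap $\bracket{B^{k}_{q}x}{\phi_q^{\otimes k+1}}=\alpha^{m-\abs{x}}\beta^{\abs{x}}$. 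Both routes are sound and rest on the same underlying combinatorics (your support argument is essentially the disjointness of the $P_x$ viewed one pair at a time). What the paper's partition buys is that orthogonality requires no separate argument and that $\mu\ket{\nu'}$ is obtained in explicit form as a superposition over $s_1+\cdots+s_{k+1}+k+1\le m$, a fact the paper reuses in the clean-up analysis of Section~\ref{seccleanup}; what your route buys is that the coefficient identity reduces to a one-line scalar computation and the Pythagorean identification $\mu^2=1-\sum_{\abs{x}\le k}\alpha^{2(m-\abs{x})}\beta^{2\abs{x}}$, matching Eq.~(\ref{eq:form}), is immediate.
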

%------------------------------------------------------------------------------%
\begin{proof}
The state $\ket{\phi_q}^{\otimes k+1}$ is a superposition of (computational)
basis states of the form
$\ket{s_1,\dots,s_{k+1}}$, where $s_1,\dots,s_{k+1} \in \{0,1,\dots,q\}$.
Intuitively, it is useful to think of each such basis state as an encoding
of a binary string $0^{s_1}10^{s_2}1 \cdots 0^{s_{k+1}}1$
(whose Hamming weight is $k+1$ and length is $s_1+\cdots+s_{k+1} + k+1$).
We will show that these basis states can be naturally partitioned into equivalence classes:
one for each prefix $x \in \{0,1\}^m$ with $|x| \le k$, and one for all
the remaining basis states.

Let $x \in \{0,1\}^m$ with $h=|x| \le k$ be of the form
$x = 0^{s_1}10^{s_2}10^{s_3}\dots 0^{s_h}10^{t}$.
Consider the set $P_x$ that consists of all $\ket{s_{1}^{\prime},s_{2}^{\prime},\dots,s_{k+1}^{\prime}}$
that are encodings of strings whose $m$-bit prefix is $x$.
The set $P_x$ consists of all $\ket{s_{1}^{\prime},s_{2}^{\prime},\dots,s_{k+1}^{\prime}}$ such
that
$(s_{1}^{\prime},s_{2}^{\prime},\dots,s_{h}^{\prime}) = (s_1,s_2,\dots,s_h)$,
$s_{h+1}^{\prime} \in \{t,\dots,q\}$, and $s_{h+2}^{\prime},\dots,s_{k+1}^{\prime} \in \{0,\dots,q\}$.
It follows that the sum of all the terms in the superposition
\begin{equation}\label{eq:full}
\ket{\phi_q}^{\otimes k+1} = \sum_{s_{1}^{\prime}=0}^q \sum_{s_{2}^{\prime}=0}^q
\dots \sum_{s_{k+1}^{\prime}=0}^q
\alpha^{s_{1}^{\prime} + s_{1}^{\prime} + \dots + s_{k+1}^{\prime}}\beta^{|\{\ell|s_{\ell}^{\prime}<q\}|}
\ket{s_{1}^{\prime},s_{2}^{\prime},\dots,s_{k+1}^{\prime}}
\end{equation}
that correspond to elements of $P_x$ is
\begin{align}
& \alpha^{s_1}\beta \cdots \alpha^{s_h}\beta
\ket{s_1,\dots,s_h}
\left(
\sum_{j=t}^{q-1}\alpha^{j}\beta\ket{j}+\alpha^q\ket{q}\right)\ket{\phi_q}^{\otimes k-h} \nonumber \\
& \quad = \alpha^{s_1}\beta \cdots \alpha^{s_h}\beta\alpha^{t}
\ket{s_1,\dots,s_h}
\left(
\sum_{j=0}^{q-t-1}\alpha^{j}\beta\ket{j+t}
+\alpha^{q-t}\ket{q}\right)\ket{\phi_q}^{\otimes k-h} \nonumber \\
& \quad = \alpha^{m-|x|}\beta^{|x|} B^{k}_{q}\ket{x},
\end{align}
which is the appropriate weighting for $B^{k}_{q}\ket{x}$ in the sum in Eq.~(\ref{eq:offset}).

Thus, the basis states in the superposition in Eq.~(\ref{eq:full}) corresponding to encodings of strings $x\in\set{0,1}^m$ of Hamming weight at most $k$ can be grouped into equivalence classes $P_x$. What about the remaining terms in $\ket{\phi_q}^{\otimes k+1}$ which do not fall in any $P_x$? These are the $\ket{s_1,\dots,s_{k+1}}$ where $s_1 + \cdots + s_{k+1} + k+1 \le m$.
Therefore, we can set
\begin{equation}
\mu\ket{\nu'} =
\sum_{s_1+\cdots+s_{k+1}+k+1 \le m}
\alpha^{s_1+\cdots+s_{k+1}}\beta^{k+1}
\ket{s_1,\dots,s_{k+1}},
\end{equation}
where $\mu\in\reals$ is chosen so that $\ket{\nu'}$ is normali{\s}ed.
All the $B^{k}_{q}\ket{x}$ and $\ket{\nu'}$ are mutually orthogonal since
they are constructed from a partition of the basis states.
%It is trivial to check that $B^{k}_{q}\ket{x}$ is normali{\s}ed from its definition.
\end{proof}

%Note that we could expand the definition of $B^{k}_{q}$ such that, for $|x|=k+1$, $B^{k}_{q}\ket{x} = \ket{s_1,\dots,s_{k+1}}$.
%Then $\ket{\nu'}$ contains encodings of all strings with $|x|=k+1$.
%In contrast, the state $\ket{\nu}$ in Eq.\ \eqref{eq:form} includes all strings with $|x|=k+1$ \emph{and above}.
%As a result, $\ket{\nu'}$ contains encodings of valid strings, but the amplitudes are not correct.

\subsection{Converting from the $B$ Encoding to the $C$ Encoding}\label{seccleanup}

We have thus far shown how to prepare states in the encoding $B^k_q$.
As mentioned above, we can now convert from the encoding $B^{k}_{q}$ to our desired encoding $C^{k}_{m}$. This is achieved by ``cleaning up"
the registers that follow register $h=|x|$ in $B^{k}_{q}\ket{x}$ [compare Eq.~(\ref{eq:Cencode}) with Eq.~\eqref{eq:Bencode}].
The difference is that, instead of these registers being in the state $\ket{m}$, they are in the state $\ket{\phi_q}$ (for registers $h+2$ to $k+1$).
Register $h+1$ is in a state that is similar to $\ket{\phi_{q-t}}$, except that the basis states are shifted by $t$.
Therefore, we need a way of converting these registers to the state $\ket{m}$.
However, this conversion depends on both $h$ and $t$, so we first need these quantities.

We will first give a simplified explanation, then expand on the technical details.
To determine $h$ and $t$, we compute the prefix sums
\begin{equation}\label{eq:prefix}
\ket{s_1}\ket{s_2} \cdots \ket{s_{k+1}} \mapsto
\ket{s_1+1}\ket{s_1+s_2+2} \cdots \ket{s_1 + s_2 + \cdots + s_{k+1} + k+1}.
\end{equation}
This gives the \emph{absolute} positions of the ones.
The value of $h$ can be determined by finding the first register with a value larger than $m$ (which would give a position for a one past the end of the string).

Now we can identify register $h+1$.
For this register, we wish to subtract $t$, so that the state of this register [as in Eq.~\eqref{eq:Bencode}] becomes $\ket{\phi_{q-t}}$.
At this stage we have computed the prefix sums, and subtracting $m+1$ from this modified register gives the same result as subtracting $t$ from the unmodified register.
That is, we do not need to explicitly compute $t$ to subtract it, because it is obtained implicitly in the prefix sum.
For all the other registers we then undo the prefix sums.

At this stage we have $h$ in an ancilla, and we have subtracted $t$ from register $h+1$.
Now we can undo the procedure to prepare $\ket{\phi_q}$ in registers $h+1$ to $k+1$.
Register $h+1$ is actually in state $\ket{\phi_{q-t}}$ rather than $\ket{\phi_q}$, but it is a good approximation of state $\ket{\phi_q}$.
Therefore the inverse preparation yields states $\ket{0}$ in registers $h+1$ to $k+1$, with this being approximate for register $h+1$.
It is trivial to convert $\ket{0}$ to $\ket{m}$, then uncompute the value of $h$ in the ancilla register.
This then completes the conversion of the encoding.

In summary the overall procedure is as follows.
\begin{enumerate}
\item
Compute the prefix sums.
\item
Compute $h=|x|$ in an ancilla register.
\item
Uncompute the prefix sums for registers other than $h+1$, and subtract $m+1$ from register $h+1$.
\item
Invert the procedure to prepare $\ket{\phi_q}$ from $\ket{0}$ on registers $h+1$ to $k+1$, and swap register $h+1$ with the error flag register.
\item
Flip one qubit on registers $h+1$ to $k+1$ to change $\ket{0}$ to $\ket{m}$.
\item
Uncompute $h$ in the ancilla register.
\end{enumerate}

Next we explain the technical details, including the error flag register.
When computing the prefix sums, we can first consider the case of low-Hamming weight strings with $h\le k$.
For the first $h$ registers the result is at most $m$, whereas for register $h+1$, the result is (coherently) more than $m$.
To prevent the value in register $h+1$ wrapping around modulo $m$, we instead expand the registers to dimension $m+q+2$, and perform the computations modulo $m+q+2$.
Because the value in register $h+1$ is no more than that in $h$ (which is $\le m$) plus $q+1$, the value is $\le m+q+1$, and does not wrap around modulo $m+q+2$.
The values in registers $h+2$ to $k+1$ may wrap around, but this does not affect the calculation.
This covers steps 1 and 2 above.

Next, considering step 3, the value in register $h+1$ will be
\begin{equation}
\label{eq:regh1}
s_1+\ldots+s_{h}+s_{h+1}+h+1=m-t+s_{h+1}+1.
\end{equation}
We aim to obtain $s_{h+1}-t$ in this register.
If we had computed the value of $t$, we could uncompute the prefix sums, then subtract $t$.
However, it is obvious from Eq.~\eqref{eq:regh1} that we can just subtract $m+1$ instead.
Note that this is the first register that is larger than $m$, so subtracting $m+1$ does not result in a negative number.
We also need to uncompute the prefix sums for the other registers.
This can be achieved by working backwards from register $k+1$ to $h+2$ uncomputing prefix sums, subtracting $m+1$ from register $h+1$, then uncomputing prefix sums from register $h$ back to $1$.

Next we consider the inverse preparation in step 4.
At this stage, we have subtracted $t$ from register $h+1$ yielding the exponential state
\begin{equation}
\ket{\phi_{q-t}} = \sum_{s=0}^{q-t-1} \beta\alpha^s\ket{s}
+ \alpha^{q-t}\ket{q-t}.
\end{equation}
By choosing $q$ to be sufficiently large, $\ket{\phi_{q-t}}$ is close to $\ket{\phi_{q}}$, and inverting the procedure for preparing $\ket{\phi_{q}}$ yields an accurate approximation of $\ket{0^{r+1}}$.
To be more precise, note that $\bracket{\phi_{q-t}}{\phi_q} = 1 - (1-\beta)\alpha^{2(q-t)}$.
Therefore, we have $\bracket{\phi_{q-t}}{\phi_q} \ge 1 - \varepsilon$ if
%\begin{equation}
%\label{eq:qval}
$q \ge m + (1/\beta^2)\log(1/\varepsilon)$.
%\end{equation}
To achieve this, $\ket{\phi_{q}}$ need only consist of
$\log(m + 1/\beta^2) + \log\log(1/\varepsilon)+ O(1)$ qubits.
In particular, in our context where $\beta = \Theta(1/\sqrt{m})$,
the number of
qubits is $\log m + \log\log(1/\varepsilon) + O(1)$, so the precision scales
double exponentially with the number of additional qubits beyond $\log m$.

This approximate step could alternatively be performed using the state preparation procedure of Grover and Rudolph \cite{grovrud}.
Another alternative is to use amplitude amplification to ensure that the register is set to zero correctly.
These alternatives would also not be exact, because they would require the coherent calculation of trigonometric functions.

It is convenient for the analysis to swap register $h+1$ with an ``error flag'' register that has been prepared in the $\ket{0}$ state.
Then, if this register is measured as not zero, it flags that the clean-up operation has not occurred properly.
On the other hand, register $h+1$ \emph{is} exactly $\ket{0}$.

We also need to take account of the action of the conversion procedure on the state $\ket{\nu'}$.
This state is a superposition of basis states $\ket{s_1,\ldots,s_{k+1}}$, where $s_1+\ldots+s_{k+1}+k+1\le m$.
This means that, when we compute the prefix sums, the last register will \emph{not} be $>m$.
In this case, we can set $h=k+1$, and then make no changes to the other registers in steps 3 to 5 for this value of $h$.
This means that $\ket{\nu'}$ is unchanged.
The exact form of this state is unimportant, because it corresponds to an error.
However, $\ket{\nu'}$ is a superposition of strings of Hamming weight $h+1$ encoded using $C_m^k$, and remains so under the conversion.

In summary, the overall preparation procedure is to prepare the state $\ket{\phi_q}^{\otimes k+1}$, then perform the clean-up procedure consisting of steps 1 to 6 above.
By choosing $\log q\in \Theta(\log m + \log\log (1/\varepsilon))$ (for $q$ a power of two), this then yields the state \eqref{eq:succinct}
within distance $O(\varepsilon)$.
Our circuit has size
\begin{equation}
O\left(k\left[\log m +
\log\log(1/\varepsilon)\right]\right).
\end{equation}
The final state has no values in its registers larger than $m$, so it can be stored in registers of dimension $m+1$, though higher dimensions are required in intermediate steps.

To prepare the state, we have started with all qubits of registers in the state $\ket{0}$.
It is convenient to start with these registers in the state $\ket{m}$, flip one qubit in each register to give $\ket{0}$, then perform the preparation procedure as described above.
Then we are mapping the state $C^{k}_{m}\ket{0^{m}}$ (which is the state $\ket{m}^{\otimes k+1})$ to the succinct representation of $(\alpha\ket{0}+\beta\ket{1})^{\otimes m}$ as defined in  Eq.~(\ref{eq:succinct}).

%------------------------------------------------------------------------------%
% SECTION 4.4
%------------------------------------------------------------------------------%
\subsection{Phase Gates, Queries and Driving Operations}
\label{secphase}
Applying the phase gates, $P^{\otimes m}$, to the control qubits in their succinct
representation is straightforward because $P^{\otimes m}\ket{x} = i^{|x|}\ket{x}$.
We need only compute $|x|$ in an ancilla register, apply $\ket{s} \mapsto i^s \ket{s}$,
and then uncompute $|x|$ in the ancilla.

To apply the driving operations, we note that our definition of driving Hamiltonian
implementation fits perfectly in this context, once we compute the prefix sums to give the positions of the ones, as in Eq.~(\ref{eq:prefix}).
In the compressed representation, $V_1$ is the implementation of the driving Hamiltonian
with $t_s$ hardwired to 0 and $t_f$ controlled by the first register.
$V_2$ is the implementation with $t_s$ controlled by the first
register and $t_f$ controlled by the second register, and so on.
At the end, the prefix sums can be uncomputed.

\subsection{The Value of $m$ Needed}
\label{sec:m}
In the CGMSY construction the number of fractional queries $m$ comes from breaking up the evolution under the oracle and the driving Hamiltonian via a product formula.
To obtain error bound by $\varepsilon_{\rm tot}$ with evolution over time $T$ and driving Hamiltonian with norm $\|H\|$, the number of time intervals needed in a Lie-Trotter-Suzuki product formula for constant Hamiltonian $H$ is $O(\|H\|T(\|H\|T/\varepsilon_{\rm tot})^{\delta})$ \cite{Berry2007}.
%The value of $\delta$ can be chosen to be arbitrarily small for high-order product formulae, though the value is not important for this analysis.
For the CGMSY construction the intervals need to be of equal size, which restricts $\delta$ to $1/2$.

For time-dependent Hamiltonians, the complexity of Lie-Trotter-Suzuki product formulae will depend on the magnitude of the derivatives of $H$ when one is sampling the Hamiltonians at different times \cite{Wiebe2010}.
The situation we have here is somewhat different, because we assume that the evolution under the time-dependent driving Hamiltonian can be implemented.
In this case, the error does not depend on the time derivative, and the error for a short time interval $\delta t$ can be bounded as $\|H\|\delta t^2$ 
(this is easily derived from Eq.~(2.3) of Ref.~\cite{Huy1990}).
Hence the number of intervals to limit the overall error to $O(\varepsilon_{\rm tot})$ need be no greater than $O(\|H\|T^2/\varepsilon_{\rm tot})$.
The number of intervals in one CGMSY segment of length $O(1)$ is therefore $m=O(\|H\|T/\varepsilon_{\rm tot})$.

Another question is the precision that the time needs to be specified to in order to limit the overall error to $\varepsilon_{\rm tot}$.
It is easily shown that the error in the time needs to be bound as $O(\varepsilon'/\|H\|)$ in order to limit the error in a single operation to $\varepsilon'$.
If the time is being specified on the interval $[0,T]$, then the number of bits needed for the time is $\lceil \log(\|H\|T/\varepsilon')\rceil$.
Because there are $O(1)$ controlled Hamiltonian evolutions in each CGMSY segment, we need $\varepsilon'=O(\varepsilon_{\rm tot}/T)$.
This gives the number of bits for the time as $\log(\|H\|T^2/\varepsilon_{\rm tot})+O(1)$
(where the constant $O(1)$ is because $\varepsilon'$ may have a constant of proportionality with $\varepsilon_{\rm tot}/T$).

This result is consistent with the value of $m$ used, because $\log(\|H\|T^2/\varepsilon_{\rm tot})+O(1)$ bits are needed to specify an integer from 0 to $O(mT)$.
In the CGMSY construction, a superposition over the $m$ time intervals is used, so the number of qubits needed is $\lceil\log m\rceil$.
The number of the CGMSY segment also needs to be stored, but that can be stored in $O(\log T)$ classical bits.

One can use the number of bits for the time to place a lower bound on the complexity of implementing the driving Hamiltonian.
To obtain overall accuracy $O(\varepsilon_{\rm tot})$, the driving Hamiltonian needs accuracy of $O(\varepsilon_{\rm tot}/\|H\|T)$ in the time.
There are $\Theta(\|H\|T^2/\varepsilon_{\rm tot})$ starting and finishing times, so by a counting argument
the gate complexity is $\Omega(\log(\|H\|T/\varepsilon_{\rm tot}))$.
If the driving Hamiltonian is constant, then it is only the length of the time which is important, and that is limited to $O(1)$.
The number of times is then $\Theta(\|H\|T/\varepsilon_{\rm tot})$, but the lower bound on the complexity is still $\Omega(\log(\|H\|T/\varepsilon_{\rm tot}))$.
For constant error we therefore have $G=\Omega(\log(\|H\|T))$, as used in Section~\ref{sec:statement_main}.

%------------------------------------------------------------------------------%
% SECTION 4.5
%------------------------------------------------------------------------------%
\section{Measurement of the Control Qubits}
\label{ssec:msmt}

What remains is to perform the final measurement.
This should logically correspond to what happens if the state is decoded from its
succinct representation to $m$ qubits
and then, for each qubit, an $R$ gate is applied and it is measured in the computational basis.
Of course, this cannot be literally implemented this way, because it would increase
the gate and space usage to at least $m$;
our task is to \textit{logically} perform this while remaining in the succinct representation.

Recall now that in Section~\ref{sec:initialization}, we constructed a procedure that approximately prepares
$R^{\otimes m}\ket{0^{m}}$ in succinct form [see Eq.~(\ref{eq:succinct})].
We define $U_m$ to be the ideal unitary that would exactly prepare the state \eqref{eq:succinct}.
The action of the ideal state preparation procedure is then
$U_m C^{k}_{m}\ket{0^{m}} \approx C^{k}_{m}R^{\otimes m}\ket{0^{m}}$.
The procedure we have described does not exactly perform this unitary, but it is within distance $O(\varepsilon)$.
Also, we do not have an exact equality, because representations of terms with Hamming weight greater than $k$ in $R^{\otimes m}\ket{0^{m}}$ are not obtained with the correct weights.
More precisely, we have
\begin{equation}
\label{eq:analyse1}
U_m C^{k}_{m}\ket{0^{m}}
= \sum_{{x \in \{0,1\}^m} \atop {|x| \le k}} \alpha^{m - |x|}\beta^{|x|} C^{k}_{m}\ket{x}
+ \mu\ket{\nu'}.
\end{equation}
%for some $\ket{\nu'}$ orthogonal to all $C^{k}_{m}\ket{x}$ for $|x| \le k$.
This is to be compared with the uncompressed setting [Eq.~(\ref{eq:form})], in which we have
\begin{equation}
    R^{\otimes m}\ket{0^m}= \sum_{{x \in \{0,1\}^m} \atop {|x| \le k}} \alpha^{m - |x|}\beta^{|x|} \ket{x}+ \mu\ket{\nu}.\label{eq:analyse2}
\end{equation}
In terms of the \textit{logical} data, $U_m$ and $R^{\otimes m}$ produce almost
the same state when applied to $\ket{0^{m}}$.

Returning to the issue of measurement, in the \emph{uncompressed} basis we would like to perform $R^{\otimes m}$, then perform a computational basis measurement.
In the particular case that the computational basis measurement yielded all zeros, the measurement operator is $\ket{0^m}\bra{0^m}R^{\otimes m}$.
Because we are performing all operations in the compressed basis, this measurement operator can be represented by $C_m^k \ket{0^m}\bra{0^m}R^{\otimes m}(C_m^k)^\dagger$.
Because $R$ is self-inverse, this is approximately the same as $C^{k}_{m}\ket{0^m}\bra{0^{m}}(C^{k}_{m})^\dagger U_m^\dagger$.
That is, to achieve this measurement result we first invert the preparation procedure described by $U_m$.
Then, because $C^{k}_{m}\ket{0^{m}}=\ket{m}^{\otimes k+1}$ is a computational basis state, we can achieve the desired result by performing a computational basis measurement.

Ideally, this is what we want, but we also need to be able to find the positions of the ones in
the case that the all-zero string is not obtained.
At first glance, one might imagine that applying $U_m^\dagger$ in place of $R^{\otimes m}$
would yield a succinct representation of the final outcome state, so measuring in the computational basis would provide the correct result.
Unfortunately, this does not accurately simulate the final measurement except in the case where the all-zero string is obtained.
The problem is that $U_m$ and $R^{\otimes m}$ are \textit{only} in close agreement when
applied to the logical state $\ket{0^{m}}$.
For any other logical state $\ket{x}$ (for non-zero $x \in \{0,1\}^m$), applying $U_m$ and
$R^{\otimes m}$ need not yield states in any close agreement.

Our first observation towards overcoming this problem is that we can at least perform an
\textit{incomplete} measurement that captures a
seemingly small part of what we are seeking: we can cause the state to either collapse to
logical $\ket{0^m}$ or to the subspace that is the orthogonal complement of this state---and
with the correct probabilities.
This is achieved by performing $U_m^\dagger$ and then the 2-outcome incomplete projective measurement
that distinguishes between the logical state $\ket{0^m}$ and its orthogonal complement
$\ket{0^m}^{\perp}$, and then applying $U_m$ to the resulting collapsed state.
Our method to complete the measurement is to apply the above procedure recursively, on
the two halves of the logical string. We now first motivate this procedure intuitively, followed by further technical details and a rigorous proof of correctness.

\subsection{Measuring in Succinct Form: Intuition}
The intuition behind our measurement strategy is given by the following simple thought experiment.
Consider the problem of measuring an $m$-qubit state $\ket{\psi}$ in the computational basis.
This can be accomplished by performing a sequence of two-outcome measurements in a variety
of ways.
One obvious approach is to measure the state of the first qubit, then the second qubit, and so on.
Each final outcome $x \in \{0,1\}^m$ will occur with exactly the same probability as with the original
complete measurement.
We now describe an alternative---and unconventional---approach for simulating the same
measurement.

First, perform the measurement distinguishing between $\ket{0^m}$ and $\ket{0^m}^{\perp}$, its
orthogonal complement.
If the state collapses to $\ket{0^m}$ we halt, outputting $0^m$.
Otherwise (when the state collapses to $\ket{0^m}^{\perp}$), apply the measurement
$\ket{0^{m/2}}$ vs.\ $\ket{0^{m/2}}^{\perp}$ to the first $m/2$ qubits.
If that part of the state collapses to $\ket{0^{m/2}}$ then output $0^{m/2}$ for the first
$m/2$ bits; otherwise recurse further.
Once this recursive measurement procedure for the first $m/2$ qubits has terminated, repeat it for
the second $m/2$ qubits.
Each final outcome $x \in \{0,1\}^m$ occurs with exactly
the same probability as with the original complete measurement.
Note that although this process may appear complicated, it terminates fast whenever the Hamming weight of
the final outcome $x$ is small: for Hamming weight up to $k'$, at most $k' \log m$ steps are performed.

Our actual scenario is different than the one described above in that the final measurement is in
the basis $\{ R^{\otimes m}\ket{x} : m \in \{0,1\}^m\}$ rather than the computational basis.
However, our logical $U_m$ and $U_m^\dagger$ permit us to approximate the
$R^{\otimes m}\ket{0^m}$ vs.\ $R^{\otimes m}\ket{0^m}^{\perp}$ measurement well.
Also, making use of the fact that the underlying operation that we are simulating
has a tensor product structure,
$R^{\otimes m}\ket{x_1x_2} = R^{\otimes m/2}\ket{x_1}R^{\otimes m/2}\ket{x_2}$ for
any $x_1, x_2 \in \{0,1\}^{m/2}$, we can emulate the recursive procedure in the above thought
experiment. We now make this rigorous.

\subsection{Measuring in Succinct Form: Details}
We now introduce Alg.~\ref{alg:measure}, which formali{\s}es the intuition behind the recursive measurement outlined above, and show that it simulates the desired measurement in succinct form. Recall that we assume without loss of generality that $m$ is a power of $2$.

Before stating Alg.~\ref{alg:measure}, we require a lemma which allows us to efficiently ``split'' the encoded version of string $x=x_1x_2$ into the concatenation of the encoded versions of $x_1$ and $x_2$.
\begin{lemma}\label{lem:split}
    Let $x=x_1x_2$ for $x\in\set{0,1}^m$ with $\abs{x}\leq k$, $x_1,x_2\in\set{0,1}^{m/2}$ and $m$ a power of $2$. Then there exists a quantum circuit with complexity $O(k\log m)$ for achieving the mapping
    \begin{equation}
        C^{k}_{m}\ket{x_1x_2}\mapsto C^{k}_{m/2}\ket{x_1}\otimes C^{k}_{m/2}\ket{x_2},
    \end{equation}
    where $C^{k}_{m}\ket{x_1x_2}$, $C^{k}_{m/2}\ket{x_1}$, $C^{k}_{m/2}\ket{x_2}\in (\complex^{m+1})^{\otimes k+1}$.
\end{lemma}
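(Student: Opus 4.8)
The plan is to realize the splitting map by a short sequence of reversible arithmetic operations on the $(k{+}1)$ registers, each of dimension $m{+}1$. Recall $C^k_m\ket{x_1x_2}$ stores, in its first $h=|x|$ registers, the gap-lengths $s_1,\dots,s_h$ of the ones in $x_1x_2$, padded with copies of $m$. The target state $C^k_{m/2}\ket{x_1}\otimes C^k_{m/2}\ket{x_2}$ stores the gap-lengths of the ones of $x_1$ in its first block, padded with $m/2$, and likewise the gap-lengths of $x_2$ in the second block. The key point is that the gap-lengths of $x_1$ are a prefix of the gap-lengths of $x_1x_2$ — up to one boundary register. If $x_1$ has Hamming weight $h_1$, then $s_1,\dots,s_{h_1}$ are exactly the gaps of $x_1$ \emph{except} that the trailing run of zeros of $x_1$ has been absorbed into the gap $s_{h_1+1}$ that straddles the boundary between $x_1$ and $x_2$; conversely the first gap of $x_2$ is $s_{h_1+1}$ minus that trailing run. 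So the whole task reduces to (i) locating the boundary, i.e.\ finding $h_1$ and the amount $r$ of ``overhang'' of $s_{h_1+1}$ past the midpoint, and (ii) a localized fix-up at the one register $h_1+1$, plus relabelling padding values from $m$ to $m/2$ where appropriate.

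\textbf{Key steps.} First I would compute the prefix sums as in Eq.~(\ref{eq:prefix}), mapping $\ket{s_1,\dots,s_{k+1}}$ to the absolute positions $\ket{p_1,\dots,p_{k+1}}$ with $p_i = s_1+\dots+s_i+i$; this costs $O(k\log m)$ gates since there are $k{+}1$ additions on $O(\log m)$-bit numbers. Second, I would determine $h_1$ by locating the first register whose absolute position exceeds $m/2$ — this is a threshold comparison done coherently across the $k{+}1$ registers, again $O(k\log m)$ gates — and record $h_1$ (and, symmetrically, $h$) in small ancillas. Third, using $h_1$ as a control, I would split register $h_1+1$: its current absolute position $p_{h_1+1}$ satisfies $p_{h_1+1} > m/2$, and the overhang $r := p_{h_1+1} - m/2 - 1$ (analogously to the $t$-subtraction trick in Section~\ref{seccleanup}, where subtracting $m+1$ was shown to be equivalent to subtracting the implicit run length) is precisely the first gap of $x_2$; I would copy $r$ into the register that will become the first register of the second block, and overwrite register $h_1+1$ of the first block with the value that will decode to $m/2$ (accounting for the trailing run of $x_1$). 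Fourth, I would uncompute the prefix sums, controlled by $h_1$, separately on the two blocks so each block carries gap-lengths rather than absolute positions. Fifth, I would fix up the padding: registers $h_1+2,\dots$ of the first block and registers $h_2+2,\dots$ of the second block currently hold $m$ (inherited padding) or stale data and must be reset to $m/2$; this is a controlled relabelling, $O(k\log m)$ gates. Finally I would uncompute the ancillas holding $h_1, h, r$ by recomputing them from the now-split encoding (they are functions of it), leaving exactly $C^k_{m/2}\ket{x_1}\otimes C^k_{m/2}\ket{x_2}$. For inputs with $|x_1|>k$ or $|x_2|>k$ the construction encodes the first $k{+}1$ ones of each half, as in Definition~\ref{def:encodingscheme}, and the same circuit does the right thing since it only ever touches the first $k{+}1$ registers.

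\textbf{Main obstacle.} The delicate part is the boundary register $h_1+1$ and the bookkeeping around it: I must extract the overhang $r$ \emph{without} having explicitly stored the trailing-zero run of $x_1$ (just as in Section~\ref{seccleanup} one avoids computing $t$ explicitly), verify that the residual value left in register $h_1+1$ of the first block decodes correctly to the padding value $m/2$ under $C^k_{m/2}$, and ensure all of this is done reversibly — in particular that every ancilla ($h_1$, $h$, $r$, and any comparison flags) is uncomputed, which requires checking that each is a deterministic function of the final split encoding so that the standard compute–copy–uncompute paradigm applies. A secondary subtlety is dimensions: intermediate values (absolute positions, sums) can exceed $m$, so as in Section~\ref{seccleanup} the registers must temporarily be taken of dimension $\Theta(m)$ and arithmetic performed modulo a large enough value to avoid wraparound, with the final state fitting back into dimension $m/2+1 \le m+1$ per register. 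Once these points are handled, the gate count is a sum of $O(1)$ rounds of $O(k)$ arithmetic operations on $O(\log m)$-bit registers, giving the claimed $O(k\log m)$.
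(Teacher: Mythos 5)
Your proposal is correct and takes essentially the same approach as the paper, whose own proof is far terser: it simply observes that both sides are computational basis states related by a reversible classical procedure on $O(k)$ registers of $O(\log m)$ bits each, hence implementable coherently with $O(k\log m)$ gates, without constructing that procedure. Your explicit construction (prefix sums, locating the boundary register $h_1+1$, extracting the overhang without computing the trailing run explicitly, and the reversible padding relabelling) supplies exactly the details the paper leaves implicit.
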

\begin{proof}
Because both $C^{k}_{m}\ket{x_1x_2}$ and $C^{k}_{m/2}\ket{x_1}\otimes C^{k}_{m/2}\ket{x_2}$ are computational basis states, the procedure that is performed is the same as would be performed classically, except that it must be performed coherently.
That is, there is a reversible classical procedure to split the encoding in the computational basis,
which immediately provides a coherent procedure for splitting the encoding.
Because there are $O(k)$ registers of size $O(\log m)$, the complexity of this procedure is $O(k\log m)$.
\end{proof}

\begin{figure}[t]
\noindent\rule{\linewidth}{0.3mm}
\begin{algorithm}\rm{$S$ = MEASURE( }$A$ , $m_1$ , $m_2$ \rm{ )}.\label{alg:measure}
    \begin{itemize}
        \item Input:\hspace{2mm}$A$ -- Registers corresponding to space $(\complex^{m+1})^{\otimes k+1}$ containing the subset $\set{m_1,\ldots, m_2}$ \\ \mbox{\hspace{20mm}}of the encoded control qubits.\\
        \mbox{\hspace{12mm}}$m_1$ -- The starting index $m_1\in[m]$ of the encoded qubits in $A$.\\
        \mbox{\hspace{12mm}}$m_2$ -- The ending index $m_2\in[m]$ of the encoded qubits in $A$.
\item Precondition: $m_2-m_1+1$ is a power of two.
        \item Output: A set of indices $S\subseteq[m]$ containing the positions where an uncompressed measurement would have found ones in the uncompressed setting.
    \end{itemize}

Perform a measurement described by the measurement operators $\measo{\comp}{0}^{m_2-m_1+1}$ and $\measo{\comp}{1}^{m_2-m_1+1}$, where $\measo{\comp}{0}^n := U_n C_n^k \ket{0^{n}}\bra{0^{n}}(C_n^k)^\dagger U_n^\dagger$
and $\measo{\comp}{1}^n:=I-\measo{\comp}{0}^n$.
Label the measurement result $d$. Then

    \begin{compactenum}
\item (Zero detected) If $d=0$: Return $S=\emptyset$.
        \item (Base case) If $d=1$ and $m_1=m_2$:
Return $S=\set{m_1}$.
        \item (Recurse) If $d=1$ and $m_2>m_1$:
Split $A$ to $A_1$ and $A_2$, containing the encoded forms of the first and second halves, respectively, of the control qubits. Then return
                \begin{equation}
                    S=\operatorname{MEASURE}(A_1,m_1,(m_1+m_2-1)/2)\cup\operatorname{MEASURE}(A_2,(m_1+m_2+1)/2,m_2).
                \end{equation}
    \end{compactenum}
\end{algorithm}
\noindent\rule{\linewidth}{0.3mm}
\end{figure}

The formal statement of the recursive measurement algorithm is given in Alg.~\ref{alg:measure}.
To perform our recursive measurement, we simply call $\operatorname{MEASURE}(A,1,m)$, where $A$ is the register containing our compressed control qubits.
Once the procedure finishes running, it will return the locations of all the ones an uncompressed measurement would have obtained when measuring the uncompressed version of $A$.
We truncate the recursive measurement procedure if $k'$ ones have been located, to limit the complexity of the procedure.

We now introduce a notation that will be used throughout the remainder of the paper in order to simplify reference to quantities in the uncompressed protocol versus the compressed protocol.
For quantities (states, operators or probabilities) in the compressed protocol, we will use a superscript or subscript ``$\comp$'', whereas we will use ``$\unco$'' for the uncompressed protocol.
To refer to quantities defined for both, we will use ``$\either$''.
We also use $n$ to refer to operations acting on a compressed sub-portion of the string of length $n$ (instead of $m$ for the full string).

To perform the measurement described by the measurement operators $\measo{\comp}{d}^n$ in Alg.~\ref{alg:measure}, we apply $U_n^\dagger$, perform the measurement that distinguishes the encoded all-zero state from all other states, then apply $U_n$.
In this form it is clear why we need to perform the operation $U_n$ after the measurement:
it means that all states orthogonal to that corresponding to measurement result $0$ are unchanged, because they are just acted upon by the identity.
The final $U_n$ operation is also included for the $0$ measurement result for simplicity, but it is not needed.
As these measurement operators are projections, they are the same as the positive operator-valued measure elements.

For simplicity we have described the measurement in terms of $U_n$ and $U_n^\dagger$, but in reality we will use operations on an expanded space that includes an error-flag ancilla.
Recall that, because $\ket{\phi_{q-t}}$ is not exactly equal to $\ket{\phi_q}$, we have a register that is not exactly reset to zero, and this is swapped into an ancilla register.
The unitary operations in this expanded space will be denoted $\widetilde U_n$ and $\widetilde U_n^\dagger$.
Then the action of $\widetilde U_n$ is
\begin{equation}
\widetilde U_n C_n^k \ket{0}\otimes \ket{0} = \sum_{{x \in \{0,1\}^n} \atop {|x| \le k+1}} \ca_x^n \left[\sqrt{1-\varepsilon_x}C_n^k  \ket{x}\otimes \ket{0} + \sqrt{\varepsilon_x}\ket{{\rm err}_x}\otimes\ket 1\right].
\end{equation}
Here the tensor product with $\ket{0}$ on the left-hand side indicates the use of ancillas that are initially in the state zero.
The amplitudes $\ca_x^n$ are the amplitudes for each $C_n^k  \ket{x}$ in the state \eqref{eq:succinct} (when $m$ is replaced with $n$).
These amplitudes include those for $|x|=k+1$ for the state $\ket{\nu'}$, which corresponds to encoded Hamming-weight $k+1$ states.
For $|x|\le k$, we have $\ca_x^n=\alpha^{n-|x|}\beta^{|x|}$.
The tensor product with $\ket{0}$ on the right-hand side indicates ancillas that will be set to zero in the case of success.
The parameter $\varepsilon_x$ is $\le \varepsilon$, and can in general depend on $x$.
The state $\ket{{\rm err}_x}$ is an error state.

For the ideal state preparation, we have
\begin{equation}
\bra{x} (C_n^k)^\dagger  U_n C_n^k \ket{0} = \ca_x^n.
\end{equation}
Using the expression for the action of $\widetilde U_n$, we find
\begin{equation}
[(\bra{x} (C_n^k)^\dagger ) \otimes \bra{0}][ \widetilde U_n C_n^k \ket{0}\otimes \ket{0}] = \ca_x^n \sqrt{1-\varepsilon_x} = \bra{x} (C_n^k)^\dagger  U_n C_n^k \ket{0} [1-O(\varepsilon)].
\end{equation}
There is no contribution from the error register, because the error flag is orthogonal to zero for that register.

To perform the measurement, we append ancillas in the zero state, and perform $\widetilde U_n^\dagger$.
Then we perform the measurement that projects onto $C_n^k \ket{0^{n}}\otimes\ket{0}$ and its orthogonal complement.
Here the tensor product with $\ket 0$ indicates the extra ancillas used by the full preparation procedure $\widetilde U_n$.
Then we perform $\widetilde U_n$.

The action of this measurement will have error $O(\varepsilon)$ from that used in the algorithm.
First, consider the resulting state for zero measurement result and initial state $C_n^k\ket{x}$.
\begin{align}
\widetilde U_n [C_n^k \ket{0^{n}}\bra{0^{n}}(C_n^k)^\dagger \otimes\ket{0}\bra{0}] \widetilde U_n^\dagger C_n^k\ket{x} \ket{0} % \nonumber \\
&= \widetilde U_n C_n^k \ket{0^{n}}\otimes \ket{0}[\bra{x}  \bra{0}(C_n^k)^\dagger  \widetilde U_n C_n^k \ket{0^n} \ket{0}]^* \nonumber \\
&= \widetilde U_n C_n^k \ket{0^{n}}\otimes \ket{0}[\bra{x} (C_n^k)^\dagger  U_n C_n^k \ket{0^n}]^* [1-O(\varepsilon)] \nonumber \\
&= \widetilde U_n C_n^k \ket{0^{n}}\otimes \ket{0}[\bra{0^n} (C_n^k)^\dagger  U_n^\dagger C_n^k \ket{x}] [1-O(\varepsilon)] .
\end{align}
Therefore we find that the probability of this result is changed by no more than $O(\varepsilon)$.
In addition, tracing over the ancillas used, $\widetilde U_n C_n^k \ket{0^{n}}\otimes \ket{0}$ is an approximation of $U_n C_n^k \ket{0^{n}}$ with trace distance $O(\varepsilon)$.
Therefore, for the zero measurement result, the resulting state has trace distance no more than $O(\varepsilon)$ from that for the ideal measurement using $U_n$.

The resulting state for measurement result $1$ is then
\begin{equation}
C_n^k\ket{x} \ket{0} - \widetilde U_n C_n^k \ket{0^{n}}\otimes \ket{0}[\bra{0} (C_n^k)^\dagger  U_n^\dagger C_n^k \ket{x}] [1-O(\varepsilon)].
\end{equation}
This is because $\widetilde U_n$ is exactly the inverse of $\widetilde U_n^\dagger$ in the expanded space.
Trivially from the result for the zero measurement result, once we trace over the ancilla the resulting state has trace distance no more than $O(\varepsilon)$ from that for the ideal measurement.
As a result, even though we can not perform $U_n$ exactly, we can approximate the measurements within error $\varepsilon$ using the $\widetilde U_n$ and $\widetilde U_n^\dagger$ operations.

To show that the algorithm correctly simulates the desired uncompressed measurement, we consider a similar recursive measurement on the uncompressed state.
We show that, except for the imprecision due to approximating $U_n$ and omitting high Hamming weight components, the low Hamming weight portions of the states in Eqs.~\eqref{eq:analyse1} and~\eqref{eq:analyse2} evolve identically.
Moreover, this holds even if the control qubits are entangled with a target register, as is generally the case here.

In the uncompressed setting, the state of the control and target registers before the final measurement can be described as approximately
\begin{equation}\label{eq:ent1}
    \ket{\instalo{\unco}}:=\sum_{{x \in \{0,1\}^m} \atop {|x| \le k}} \co_{x,0}^{()} \ket{x}\ket{w_x},
\end{equation}
where $\ket{w_x}$ describes the state of the target register where the queries $Q$ are applied.
Note that $\ket{\instalo{\unco}}$ is unnormali{\s}ed, as we have omitted the high Hamming weight component.
Similarly, in the compressed setting, before the final measurement we approximately have the (unnormali{\s}ed) state
\begin{equation}\label{eq:ent2}
    \ket{\instalo{\comp}}:=\sum_{{x \in \{0,1\}^m} \atop {|x| \le k}} \co_{x,0}^{()} C^{k}_{m}\ket{x}\ket{w_x},
\end{equation}
where the states $\ket{w_x}$ coincide with those in the uncompressed case.
The coefficients $\co_{x,0}^{()}$ are the same in each case, and are equal to $i^{|x|}\ca_x^m$.
We use this notation for consistency with the coefficients for the intermediate states in Eqs.~\eqref{eq:ent3} and \eqref{eq:ent4} below.

We consider a measurement in the uncompressed case that is the same as in Alg.~\ref{alg:measure}.
We show that the results obtained in the two cases are close, but there are two sources of error:
(1) the error incurred due to the high Hamming weight component of the state, and (2) the error due to not implementing $U_n$ exactly.
First we discuss the \emph{error-free} case, i.e.\ where (1) we omit the high Hamming weight component, and where (2) $U_n$ is implemented exactly.
We subsequently reintroduce both sources of error and analy{\s}e their impacts.
In the error-free analysis, we show the following.

\begin{theorem}[Error-free simulation]\label{thm:errorfree}
Assume we are in the error-free setting defined above.
Then, suppose that before the final measurement, the states of the uncompressed and compressed control and target qubits are given by Eqs.~(\ref{eq:ent1}) and (\ref{eq:ent2}), respectively.
Then, Alg.~\ref{alg:measure} exactly simulates the uncompressed $R^{\otimes m}$ measurement in the following sense:
    \begin{enumerate}
        \item After running Alg.~\ref{alg:measure}, the probability of obtaining a given measurement result is the same as for the uncompressed $R^{\otimes m}$ measurement, and
        \item for a given measurement result the state of the target register in both uncompressed and compressed settings matches.
    \end{enumerate}
\end{theorem}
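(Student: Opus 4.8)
The plan is to prove the theorem by induction on the length $n=m_2-m_1+1$ of the block of encoded control qubits on which $\operatorname{MEASURE}$ is invoked ($n$ a power of two, all vectors treated as unnormalized), with the inductive claim that the compressed and uncompressed recursive measurements run in lockstep: the two states agree at every node of the recursion tree once one accounts for the active block being stored in the encoding $C^k_n$ on the compressed side and as raw qubits on the uncompressed side. Writing $E_n$ for the isometry extending $\ket{x}\mapsto C^k_n\ket{x}$ linearly over $\mathrm{span}\{\ket{x}:|x|\le k\}$, I would show: if $\operatorname{MEASURE}$ is called on a compressed block in state $\sum_{|x|\le k}c_x\,C^k_n\ket{x}\ket{w_x}$ while the mirror recursion on the uncompressed qubits (the one the theorem calls the uncompressed $R^{\otimes m}$ measurement) is run on $\sum_{|x|\le k}c_x\ket{x}\ket{w_x}$ with the same $c_x$, then both produce the same distribution over output sets $S$, and conditioned on each $S$ the post-measurement states are again related by $E_n$ on the block. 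Since $E_n$ is an isometry on the low-weight subspace, this forces the probabilities and the reduced target-register states to match, and taking $n=m$ with the pre-measurement states of Eqs.~(\ref{eq:ent1}) and~(\ref{eq:ent2}) as the root gives the theorem. Here ``error-free'' means precisely: $U_n$ is the ideal unitary of Eq.~(\ref{eq:analyse1}), and after each two-outcome measurement the Hamming-weight-$(k+1)$ tail that appears is discarded, the same discard being made in the uncompressed recursion.

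\textbf{The single-step identity.} The one quantitative ingredient is that $U_n$, although essentially arbitrary away from the logical zero state, reproduces the correct inner product \emph{with} the logical zero state. From Eq.~(\ref{eq:analyse1}), the orthonormality of $\{C^k_n\ket{x}:|x|\le k\}$, and the orthogonality of $\ket{\nu'}$ to all of these,
\[
\bra{0^n}\,(C^k_n)^\dagger\, U_n^\dagger\, C^k_n\ket{x}\;=\;\alpha^{\,n-|x|}\beta^{\,|x|}\;=\;\bra{0^n}R^{\otimes n}\ket{x}\qquad(|x|\le k),
\]
the right-hand equality being immediate from $\langle 0|R|0\rangle=\alpha$, $\langle 0|R|1\rangle=\beta$. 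Hence, for the paired states above, the zero-branch amplitude obtained by contracting the block against $U_nC^k_n\ket{0^n}$ equals the one obtained by contracting against $R^{\otimes n}\ket{0^n}$; so the outcome $d=0$ of $\{\measo{\comp}{0}^n,\measo{\comp}{1}^n\}$ has the same probability and the same conditional target state as the corresponding outcome in the uncompressed recursion, and—using that the weight-$\le k$ part of $U_nC^k_n\ket{0^n}$ is $E_n$ applied to the weight-$\le k$ part of $R^{\otimes n}\ket{0^n}$—the $d=1$ outcome state, after discarding its weight-$(k+1)$ tail, is exactly $E_n$ applied on the block to the uncompressed $d=1$ outcome state. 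This is the computation already carried out in the excerpt for $\widetilde U_n$, specialized to the error-free regime where the error-flag ancilla and the parameters $\varepsilon_x$ are absent.

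\textbf{The induction.} Given the single-step identity, the induction is short. If $d=0$, both sides return $S=\emptyset$ and the invariant holds. If $d=1$ and $n=1$, both return $S=\{m_1\}$, and the single two-outcome measurement is just the measurement in the basis $\{R\ket{0},R\ket{1}\}$, i.e.\ the one-qubit $R^{\otimes 1}$ measurement. If $d=1$ and $n>1$, apply Lemma~\ref{lem:split}, which realizes $C^k_n\ket{x_1x_2}\mapsto C^k_{n/2}\ket{x_1}\otimes C^k_{n/2}\ket{x_2}$ (legitimate since $|x_1|,|x_2|\le|x|\le k$, a condition preserved down the recursion because Hamming weight never increases); on the uncompressed side the corresponding ``split'' is the trivial regrouping of the $n$ qubits into halves. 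The resulting states are again related by an encoding isometry on each half, so the induction hypothesis for length $n/2$ applies first to $\operatorname{MEASURE}(A_1,\dots)$ and then—once that call returns, recalling it acts as the identity on $A_2$ and on the target—to $\operatorname{MEASURE}(A_2,\dots)$. Since the internal outcomes $d$ are equidistributed at every node, the two recursion trees have the same shape, the returned sets $S$ match, and the final target states agree.

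\textbf{Main obstacle.} The delicate point is arranging the inductive hypothesis so that it is genuinely closed under the two recursive calls: when $\operatorname{MEASURE}(A_1,\dots)$ runs, the sibling half $A_2$ is already in encoded form, so the ``everything-else'' registers are not literally the same system on the two sides. The resolution is that $\operatorname{MEASURE}(A_1,\dots)$ is supported entirely on $A_1$ (and its own scratch space), so the encoding of $A_2$ is irrelevant to that call; formally I would state the inductive hypothesis in terms of the correlation structure between the active block and all other registers—requiring only that the coefficients $c_x$ and the Gram matrix $[\langle w_x|w_{x'}\rangle]$ coincide on the two sides—which is manifestly preserved when a processed sub-block is absorbed into ``everything else'' and which, because $E_n$ is an isometry, forces the reduced target-register states to agree after the control blocks are traced out. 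The remaining bookkeeping—checking that discarding the weight-$(k+1)$ tails on the two sides preserves the relation (this is exactly the place where the two settings genuinely differ, and where ``error-free'' is used), and that Lemma~\ref{lem:split} is only ever invoked on weight-$\le k$ strings—is routine.
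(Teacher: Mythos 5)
Your proposal is correct and takes essentially the same approach as the paper: the same key identity $\bra{0^n}(C^k_n)^\dagger U_n^\dagger C^k_n\ket{x}=\bra{0^n}R^{\otimes n}\ket{x}=\alpha^{n-|x|}\beta^{|x|}$ drives a single-step equivalence of the two-outcome measurements, and an induction over the recursive measurement maintains that the compressed and uncompressed states agree up to the encoding isometry (equivalently, have identical amplitudes $\co_{x,\ell}^{(d_1,\ldots,d_\ell)}$), with the splitting step handled by Lemma~\ref{lem:split}. The only difference is organizational --- you induct structurally on the block length $n$, whereas the paper linearizes the recursion and inducts on the step index $\ell$ --- which does not change the substance of the argument.
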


\begin{proof}
Measuring $R^{\otimes m}\ket{\widetilde{\psi}}$ in the computational basis can also be simulated using a recursive approach; namely, we apply $R^{\otimes m}$, followed by the incomplete measurement of $\ket{0^{m}}$ versus its orthogonal complement, then apply $R^{\otimes m}$.
This can be represented by the measurement operators $\measo{\unco}{d}^m$, with
\begin{equation}
\measo{\unco}{0}^n := R^{\otimes n}\ket{0^{n}}\bra{0^{n}}R^{\otimes n},
\end{equation}
and $\measo{\unco}{1}^n:= I-\measo{\unco}{0}^n$.
Similar to Alg.~\ref{alg:measure}, we are including the application of $R^{\otimes m}$ for both measurement results for simplicity, though it is not needed for result $0$.
If we obtain $1$ as the outcome, we recurse on the two blocks of $m/2$ qubits by applying the measurement with operators $\measo{\unco}{d}^{m/2}$, and so forth.

To prove the result, we simply need to show that at each step in the recursion the states resulting from measurement operators $\measo{\comp}{d}^n$ and $\measo{\unco}{d}^n$ are equivalent.
Let us denote the measurement result obtained at each step in the recursive measurement scheme by $d_j$.
Then, at step $\ell$, we have measurement results $d_1,\ldots,d_{\ell-1}$, and will have a state that depends on those measurement results.
Let us assume that at this step we have equivalent states for the compressed and uncompressed cases.
The base case is that for $\ell=1$, where the initial states \eqref{eq:ent1} and \eqref{eq:ent2} are equivalent.
Then the states for the two cases can be expressed as
\begin{align}\label{eq:ent3}
    \ket{\midstate{\comp}{\ell-1}}&=\sum_{|x| \le k} \co_{x,\ell-1}^{(d_1,\ldots,d_{\ell-1})} (C^{k}_{n}\otimes C^{k}_{\rm rest}) \ket{x}\ket{w_x}, \\
\label{eq:ent4}
    \ket{\midstate{\unco}{\ell-1}}&=\sum_{|x| \le k} \co_{x,\ell-1}^{(d_1,\ldots,d_{\ell-1})}  \ket{x}\ket{w_x}.
\end{align}
At this stage the encoding will be a succinct encoding on a subset of $n$ of the digits of $x$, and another encoding of the remaining digits (denoted $C^{k}_{\rm rest}$), the exact form of which is unimportant for this analysis.
The subset of $n$ of the digits of $x$ will depend on $d_1,\ldots,d_{\ell-1}$.
This dependence has not been indicated here for brevity.
We also omit $x \in \{0,1\}^m$ from the sum for brevity.

In order for the results obtained for the compressed and uncompressed cases to be equivalent, all that is required is that the amplitude weightings $\co_{x,\ell-1}^{(d_1,\ldots,d_{\ell-1})}$ in Eqs.\ \eqref{eq:ent3} and \eqref{eq:ent4} are the same.
The results are equivalent in the sense that the probability of the measurement results, as well as the state of the target system for a given measurement result, are the same.
The probability of the measurement results will be obtained from the normali{\s}ation of the state, which must be the same if the amplitudes are the same.
Similarly the resulting state in the target system will be the same if the amplitudes are the same.

We will adopt the notation that $I_{\rm rest}$ indicates the identity on the remaining registers, so the overall measurement operator is $\measo{\comp}{d}^n\otimes I_{\rm rest}$.
We will also adopt the notation that $x_n$ is the subset of $n$ digits of the string $x$, and $x_{\rm rest}$ is the remaining digits.
Then we have
\begin{equation}
    \bra{0^n} R^{\otimes n}\ket{x_n}=\bra{x_n}R^{\otimes n}\ket{0^n} =\alpha^{n-\abs{x_n}}\beta^{\abs{x_n}}=\bra{x_n}(C^k_n)^\dagger U_n C^k_n\ket{0^n} =\bra{0^n}(C^k_n)^\dagger U_n^\dagger C^k_n\ket{x_n}.
\end{equation}
For the compressed case, consider performing the measurement with operators $\measo{\comp}{d}^n$.
In the case that the measurement result is $d=0$, our compressed state becomes
    \begin{align}
&(\measo{\comp}{0}^n \otimes I_{\rm rest})\ket{\midstate{\comp}{\ell-1}}
\approx \left(U_nC^{k}_{n}\ketbra{0^n}{0^n}(C^{k}_{n})^\dagger U_n^\dagger \otimes I_{\rm rest}\right) \left[\sum_{|x| \le k} \co_{x,\ell-1}^{(d_1,\ldots,d_{\ell-1})} (C^{k}_{n}\otimes C^k_{\rm rest})\ket{x}\ket{w_x}\right]\nonumber\\
        &\qquad =\sum_{\abs{x}\leq k}\co_{x,\ell-1}^{(d_1,\ldots,d_{\ell-1})}\left(\bra{0^n}(C^{k}_{n})^\dagger U_n^\dagger C^{k}_{n}\ket{x_n}\right)U_nC^{k}_{n}\ket{0^n}C^k_{\rm rest}\ket{x_{\rm rest}}\ket{w_x}\nonumber\\
        &\qquad=\sum_{\abs{x}\leq k}\co_{x,\ell-1}^{(d_1,\ldots,d_{\ell-1})}\bra{0^n}R^{\otimes n}\ket{x_n} U_nC^{k}_{n}\ket{0^n}C^k_{\rm rest}\ket{x_{\rm rest}}\ket{w_x}\nonumber\\
        &\qquad\approx\sum_{\abs{x}\leq k}\co_{x,\ell-1}^{(d_1,\ldots,d_{\ell-1})}\bra{0^n}R^{\otimes n}\ket{x_n} \left(\sum_{\abs{y}\leq k}\ca_y^n C^{k}_{n}\ket{y}\right)C^k_{\rm rest}\ket{x_{\rm rest}}\ket{w_x}
=: \ket{\tilde\psi_{\comp,\ell}^{(d_1,\ldots,d_{\ell-1},0)}},
\label{eq:comap}
    \end{align}
where $\ca_y^n=\alpha^{n-|y|}\beta^{|y|}$, and $y$ is an $n$-digit string.
The approximate equality in the first line of Eq.~\eqref{eq:comap} is because the measurement operator $\measo{\comp}{0}^n$ cannot be obtained exactly, because the unitary $U_n$ is not performed exactly.
The approximate equality in the last line is because the high Hamming weight components have been omitted.
In the error-free setting the error in these approximations is ignored.

In comparison, in the uncompressed setting, a similar calculation yields, for $d=0$,
    \begin{align}
& (\measo{\unco}{0}^n \otimes I_{\rm rest})\ket{\midstate{\unco}{\ell-1}}=(R^{\otimes{n}}\ketbra{0^n}{0^n}R^{\otimes{n}} \otimes I_{\rm rest}) \ket{\midstate{\unco}{\ell-1}}\nonumber \\
        &\qquad \approx\sum_{\abs{x}\leq k}\co_{x,\ell-1}^{(d_1,\ldots,d_{\ell-1})}\bra{0^n}R^{\otimes n}\ket{x_n}\left(\sum_{\abs{y}\leq k}\ca_y^n \ket{y}\right)\ket{x_{\rm rest}}\ket{w_x} 
=: \ket{\psi_{\unco,\ell}^{(d_1,\ldots,d_{\ell-1},0)}}.
\label{eq:unap}
    \end{align}
The approximate equality in the last line is again due to omitting high Hamming weight components.
In the error-free setting the error in this approximation is ignored.
In the case that the measurement result is $d=1$, then the states obtained are
\begin{align}
\left( I - \measo{\comp}{0}^n\otimes I_{\rm rest}\right)\ket{\midstate{\comp}{\ell-1}} = \ket{\midstate{\comp}{\ell-1}} - \ket{\tilde\psi_{\comp,\ell}^{(d_1,\ldots,d_{\ell-1},0)}} =: \ket{\tilde\psi_{\comp,\ell}^{(d_1,\ldots,d_{\ell-1},1)}}, \nonumber \\
\left( I - \measo{\unco}{0}^n\otimes I_{\rm rest}\right)\ket{\midstate{\unco}{\ell-1}} = \ket{\midstate{\unco}{\ell-1}} - \ket{\psi_{\unco,\ell}^{(d_1,\ldots,d_{\ell-1},0)}} =: \ket{\psi_{\unco,\ell}^{(d_1,\ldots,d_{\ell-1},1)}}.
\end{align}

Above we have defined resulting states after the measurements in the uncompressed and compressed setting of $\ket{\midstate{\unco}{\ell}}$ and $\ket{\tilde\psi_{\comp,\ell}^{(d_1,\ldots,d_{\ell})}}$, respectively.
The quantity $\ket{\tilde\psi_{\comp,\ell}^{(d_1,\ldots,d_{\ell})}}$ is the state in the compressed case before the change in the compression.
To obtain the state $\ket{\midstate{\comp}{\ell}}$, the compression of the string must be changed as per Lemma~\ref{lem:split}.
This can be done without error, and does not change the amplitudes.
%It is therefore clear that, if the amplitudes are the same after $\ell-1$ measurements, they will be the same after measurement $\ell$.

Omitting the high Hamming weight states, we start with states $\ket{\instalo{\either}}$, which have the same amplitudes in the compressed and uncompressed cases.
Then, by the above reasoning, if the amplitudes are the same at step $\ell-1$, they are the same at step $\ell$.
Therefore, by induction, the amplitudes must be the same after the full recursive measurement.
Therefore the same amplitudes are obtained for the compressed and uncompressed cases, so the results obtained in the compressed and uncompressed cases are equivalent.
That is, the probabilities of the measurement results and the state of the target register for a given measurement result match.
\end{proof}

Theorem~\ref{thm:errorfree} shows that if we focus solely on the low Hamming weight subspace, and if we assume we can prepare the state $C^k_n\ket{0^n}$ exactly, then our succinct recursive measurement Alg.~\ref{alg:measure} \emph{perfectly} simulates the uncompressed measurement.
We now analy{\s}e the error incurred when these two assumptions are dropped.
First we need to identify the appropriate measure of the error in the measurement.
We would like to bound the average trace distance; i.e.
\begin{equation}
\overline{D} := \sum_{\bf b} \pr{\comp} \| \row{\unco}- \row{\comp}\|_{\rm tr} ,
\end{equation}
where $\pr{\either}$ is the probability of obtaining the measurement result ${\bf b}=(b_1,\ldots,b_m)$, and $\row{\either}$ is the state for the target system.
We would also like to bound the error in the probabilities obtained.
This is because measurement results with many ones will be difficult to correct, so we need to ensure that the probabilities for those measurement results remain small.
The error in the probability distribution can be quantified by
\begin{equation}
\Delta p := \sum_{\bf b} |\pr{\unco}-\pr{\comp}|.
\end{equation}

We can bound both those errors using the quantity
\begin{equation}
\label{eq:ddef}
D_{\rm av} := \sum_{\bf b} \| \pr{\unco}\row{\unco} - \pr{\comp}\row{\comp} \|_{\rm tr}.
\end{equation}
Because the trace distance is non-increasing under channels, and we obtain $\Delta p$ by applying the completely depolari{\s}ing channel to both $\row{\unco}$ and $\row{\comp}$ in Eq.\ \eqref{eq:ddef}, we have $\Delta p \le D_{\rm av}$.
Then we have
\begin{equation}
\pr{\comp}\| \row{\unco} - \row{\comp} \|_{\rm tr} \le \| \pr{\comp}\row{\unco} - \pr{\unco}\row{\unco} \|_{\rm tr} + \| \pr{\unco}\row{\unco} - \pr{\comp}\row{\comp} \|_{\rm tr}
\le 2 \| \pr{\unco}\row{\unco} - \pr{\comp}\row{\comp} \|_{\rm tr}.
\end{equation}
Summing over ${\bf b}$ then gives $\overline{D} \le D_{\rm av}$.

\begin{theorem}[Error bounds]\label{thm:error}
The error between compressed and uncompressed schemes can be bounded as
\begin{equation}
\label{eq:erbnd}
D_{\rm av} = O(\varepsilon'+\varepsilon k'\log m).
\end{equation}
\end{theorem}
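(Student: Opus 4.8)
The plan is to track the accumulation of the two error sources identified above---(1) omission of the high-Hamming-weight tail, and (2) inexact implementation of $U_n$ via $\widetilde U_n$---through the recursion of Alg.~\ref{alg:measure}, and to bound their combined contribution to $D_{\rm av}$ using the triangle inequality and monotonicity of trace distance under channels. First I would set up the comparison between the ideal error-free process of Theorem~\ref{thm:errorfree} and the actual process. By that theorem, if we ran with exact $U_n$ and kept only the $|x|\le k$ components, the compressed and uncompressed recursions would produce \emph{identical} amplitudes $\co_{x,\ell}^{(d_1,\ldots,d_\ell)}$ at every step, hence identical $\pr{}$ and $\row{}$, giving $D_{\rm av}=0$. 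So it suffices to bound how far the real process deviates from this ideal one.

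Next I would handle source (1). The omitted tail $\mu\ket{\nu}$ (resp.\ $\mu\ket{\nu'}$) has $\mu^2\in O(\varepsilon)$, so replacing the true pre-measurement state by its truncation $\ket{\instalo{\either}}$ perturbs the (unnormalised) state by $O(\sqrt\varepsilon)$ in Euclidean norm; but since the tail enters the final ${\bf b}$-indexed outcome distribution with total weight $O(\varepsilon)$, the contribution to $\sum_{\bf b}\|\pr{\unco}\row{\unco}-\pr{\comp}\row{\comp}\|_{\rm tr}$ coming from this truncation is $O(\varepsilon)$, hence $O(\varepsilon')$ (since $\varepsilon\le\varepsilon'$ up to constants, or one simply takes whichever convention the paper fixes for the measurement-stage cutoff). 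The key point, already flagged in the surrounding text, is that this cutoff error contributes \emph{once}, not once per recursive level.

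Then source (2): each invocation of a measurement operator $\measo{\comp}{d}^n$ is implemented by $\widetilde U_n^\dagger$, an incomplete projective measurement, and $\widetilde U_n$, and the analysis already in the excerpt shows that the resulting post-measurement state (after tracing out the error-flag ancilla) is within trace distance $O(\varepsilon')$ of the state produced by the ideal $\measo{\comp}{d}^n$ built from the exact $U_n$, and that the outcome probability is perturbed by $O(\varepsilon')$. Since the recursion terminates after at most $k'\log m$ such measurement steps (we truncate once $k'$ ones are located, and each "one" costs $\le\log m$ levels of halving), I would chain these per-step $O(\varepsilon')$ perturbations via the triangle inequality for $D_{\rm av}$: a hybrid argument swapping the ideal $\measo{}{}$ for the $\widetilde U$-implemented one at each of the $\le k'\log m$ steps, using that trace distance between sub-normalised ensembles is non-increasing under the (quantum) channel given by "apply the next recursive measurement step," accumulates to $O(\varepsilon' k'\log m)$. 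Adding the $O(\varepsilon')$ from source (1) yields $D_{\rm av}=O(\varepsilon'+\varepsilon' k'\log m)$, and since $\varepsilon$ and $\varepsilon'$ play parallel roles with $\varepsilon\le\varepsilon'$, this is the claimed $O(\varepsilon'+\varepsilon k'\log m)$.

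The main obstacle I anticipate is making the hybrid/telescoping argument clean in the presence of entanglement with the target register and the branching (non-unitary, outcome-conditioned) structure of the recursion: one must phrase the per-step error bound at the level of the \emph{unnormalised} conditional ensembles $\{\pr{}\row{}\}_{\bf b}$ so that $D_{\rm av}$ is genuinely contractive under "perform the next measurement step and split the encoding," and one must check that the encoding-splitting of Lemma~\ref{lem:split}, being exact and amplitude-preserving, contributes nothing. Care is also needed that the $\le k'\log m$ step count is an upper bound over \emph{all} branches simultaneously (outcomes with more than $k'$ ones are where the truncation is invoked, and those are exactly the ones we are willing to mishandle, absorbed into the $O(\varepsilon')$ tail term), so the factor multiplying $\varepsilon'$ is $k'\log m$ and not something larger.
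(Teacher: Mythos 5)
Your overall architecture matches the paper's: restrict attention to outcomes $\mathbf{b}$ with $|\mathbf{b}|\le k'$ (absorbing the rest into an $O(\varepsilon')$ term), rewrite $D_{\rm av}$ as a single trace distance between block-diagonal states, and run a hybrid/telescoping argument over the at most $K=1+2k'\log m$ recursive measurement steps using monotonicity of the trace distance under the CPTP maps that implement each step. The paper does exactly this, and your concerns about entanglement with the target register and the exactness of the encoding split (Lemma~\ref{lem:split}) are handled there just as you anticipate.

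However, there is a genuine error in how you assemble the final bound. You assign a per-step perturbation of $O(\varepsilon')$ to each recursive measurement, accumulate to $O(\varepsilon' k'\log m)$, and then assert that since $\varepsilon\le\varepsilon'$ this ``is'' the claimed $O(\varepsilon'+\varepsilon k'\log m)$. The inequality points the wrong way for that substitution: with $\varepsilon\le\varepsilon'$ (indeed the paper later takes $\varepsilon=O(\varepsilon_{\rm tot}/(T\log m))$ versus $\varepsilon'=O(\varepsilon_{\rm tot}/T)$, so $\varepsilon'=\varepsilon\log m$), the bound $O(\varepsilon' k'\log m)$ is strictly \emph{weaker} than $O(\varepsilon k'\log m)$, and deriving the former does not establish the latter. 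The missing idea is the paper's deliberate use of two distinct cutoffs: the per-step error of each measurement $\measo{\comp}{d}^n$ is $O(\varepsilon)$, not $O(\varepsilon')$, because both the high-Hamming-weight truncation inside each step and the imprecision of $\widetilde U_n$ (for which $\varepsilon_x\le\varepsilon$) are controlled by the \emph{larger} cutoff $k$ associated with $\varepsilon$---precisely because these errors recur $K$ times---while $k'$ and $\varepsilon'$ are reserved for errors incurred only once (the restriction to $|\mathbf{b}|\le k'$ and the cap on oracle calls). With the per-step error correctly taken as $O(\varepsilon)$, the telescoping gives $O(K\varepsilon)=O(\varepsilon k'\log m)$, which together with the one-time $O(\varepsilon')$ term yields the stated bound. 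As written, your argument proves only $D_{\rm av}=O(\varepsilon'+\varepsilon' k'\log m)=O(\varepsilon' k'\log m)$, which would degrade the downstream parameter choices by a factor of $\log m$.
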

\begin{proof}
In order to bound the value of $D_{\rm av}$, we have two main sources of error.
First is that in preparing the initial state, where the high Hamming weight terms are omitted, and second is the sequence of approximations in the measurement operators in Eqs.\ \eqref{eq:comap} and \eqref{eq:unap}.
The approach to bounding the error is as follows.
In locating the position of a single one in the measurement result, there is a contribution of $O(\varepsilon)$ to the error from each of the steps as described in Eq.\ \eqref{eq:comap}.
These need to be performed $\log m$ times, and as a result the contribution to the error is $O(\varepsilon\log m)$.
If $h$ ones need to be located, the worst case is where the sequence of measurements to locate these ones is independent, so the contribution to the error is $O(h\varepsilon\log m)$.
Since the error due to locating no more than $k'$ ones will be $O(\varepsilon')$, we can take $h\le k'$, and bound the overall error by $O(\varepsilon k' \log m+\varepsilon')$.

To make this analysis rigorous, we first want to omit the high Hamming weight measurement results.
For the measurements in the uncompressed case, the probability of measurement results with Hamming weight over $k'$ is $O(\varepsilon')$.
This is because the probability of obtaining each one is no more than $2\alpha^2\beta^2$.
Because we take $\beta^2\approx 1/8m$, the probability of obtaining more than $k'$ ones
with $k' = \Theta(\log(1/\varepsilon') / \log\log(1/\varepsilon'))$ is $O(\varepsilon')$.
Recall that we place a bound $\varepsilon'$ on errors that only occur once in each time step, and use a corresponding Hamming weight cutoff $k'$, whereas we use $k$ for limiting errors that occur multiple times in the measurement process.

To bound $D_{\rm av}$, we also need to take account of the probability of high Hamming weight measurement results for the uncompressed measurement.
We can do this in the following way.
First use
\begin{equation}
\sum_{|{\bf b}|> k'} (\pr{\comp} - \pr{\unco}) = \sum_{|{\bf b}|\le k'} (\pr{\unco} - \pr{\comp}) \le \sum_{|{\bf b}|\le k'} |\pr{\unco} - \pr{\comp}| \le  \sum_{|{\bf b}|\le k'} \| \pr{\unco}\row{\unco} - \pr{\comp}\row{\comp} \|_{\rm tr}.
\end{equation}
Therefore we can bound $D_{\rm av}$ by
\begin{align}
D_{\rm av} &\le \sum_{|{\bf b}|> k'} (\pr{\comp}+ \pr{\unco}) + \sum_{|{\bf b}|\le k'} \| \pr{\unco}\row{\unco} - \pr{\comp}\row{\comp} \|_{\rm tr} \nonumber \\
&= \sum_{|{\bf b}|> k'} (\pr{\comp}- \pr{\unco}) +2\sum_{|{\bf b}|> k} \pr{\unco} +
 \sum_{|{\bf b}|\le k'} \| \pr{\unco}\row{\unco} - \pr{\comp}\row{\comp} \|_{\rm tr} \nonumber \\
&\le O(\varepsilon') + 2\sum_{|{\bf b}|\le k'} \| \pr{\unco}\row{\unco} - \pr{\comp}\row{\comp} \|_{\rm tr}.
\label{eq:erhigh}
\end{align}
This means that omitting the high Hamming weight measurement results can only change the results by a multiplying factor and an $O(\varepsilon')$ term.
For convenience we define
\begin{equation}
D'_{\rm av} :=  \sum_{|{\bf b}|\le k'} \| \pr{\unco}\pr{\unco} - \pr{\comp}\row{\comp} \|_{\rm tr}.
\end{equation}

Next we note that the distance measure can be written as a trace distance between two states, rather than the average of trace distances.
That is,
\begin{equation}
D'_{\rm av} =   \left\| \sum_{|{\bf b}| \le k'} \left(\pr{\unco}\ket{{\bf b}}\bra{{\bf b}}\otimes \row{\unco} - \pr{\comp}\ket{{\bf b}}\bra{{\bf b}}\otimes \row{\comp} \right) \right\|_{\rm tr}.
\end{equation}
The reason for this is that the complete matrix is block-diagonal, with $\pr{\unco}\pr{\unco} - \pr{\comp}\row{\comp}$ as the blocks on the diagonal.
The trace distance for the entire density matrix is just the sum of the trace distances for the blocks on the diagonal, which is the definition of $D'_{\rm av}$.

Let us denote by $\ket{\instate{\either}}$ the states obtained after preparation and controlled operations.
Then we have
\begin{equation}
\pr{\either}\row{\either} = \trc (\measb{\either} \ket{\instate{\either}}\bra{\instate{\either}}\measb{\either}^\dagger).
\end{equation}
Here $\trc$ indicates a trace over the control registers.
Then we have
\begin{equation}
D'_{\rm av} = \left\| \sum_{|{\bf b}|\le k'} \left[\ket{{\bf b}}\bra{{\bf b}}\otimes\trc \left(\measb{\unco} \ket{\instate{\unco}}\bra{\instate{\unco}}\measb{\unco}^\dagger\right) -  \ket{{\bf b}}\bra{{\bf b}}\otimes\trc \left(\measb{\comp} \ket{\instate{\comp}}\bra{\instate{\comp}}\measb{\comp}^\dagger\right)\right] \right\|_{\rm tr}.
\end{equation}

Now note that the maps defined by
\begin{equation}
\chant{\either}(\rho) := \sum_{\bf b} \ket{{\bf b}}\bra{{\bf b}}\otimes\trc (\measb{\either} \rho \measb{\either}^\dagger) ,
\end{equation}
are completely-positive trace-preserving (CPTP).
This means that trace distance will not increase under these maps.
Now describing the states with the high Hamming weight components removed by $\ket{\instalo{\either}}$, we have
\begin{align}
&\left\| \sum_{|{\bf b}|\le k'} \left[\ket{{\bf b}}\bra{{\bf b}}\otimes\trc \left(\measb{\either} \ket{\instalo{\either}}\bra{\instalo{\either}}\measb{\either}^\dagger\right) -  \ket{{\bf b}}\bra{{\bf b}}\otimes\trc \left(\measb{\either} \ket{\instate{\either}}\bra{\instate{\either}} \measb{\either}^\dagger\right)\right] \right\|_{\rm tr} \nonumber \\
&\le \left\| {\cal E}_\alpha(\ket{\instalo{\either}}\bra{\instalo{\either}}) - {\cal E}_\alpha(\ket{\instate{\either}}\bra{\instate{\either}}) \right\|_{\rm tr} \le \left\| \ket{\instalo{\either}}\bra{\instalo{\either}} - \ket{\instate{\either}}\bra{\instate{\either}} \right\|_{\rm tr} = O(\varepsilon) .
\end{align}
As a result, using the triangle inequality gives
\begin{equation}
\label{eq:erinhi}
D'_{\rm av} \le O(\varepsilon)+\left\| \sum_{|{\bf b}|\le k'} \left[\ket{{\bf b}}\bra{{\bf b}}\otimes\trc \left(\measb{\unco} \ket{\instalo{\unco}}\bra{\instalo{\unco}} \measb{\unco}^\dagger\right) -  \ket{{\bf b}}\bra{{\bf b}}\otimes\trc \left(\measb{\comp} \ket{\instalo{\comp}}\bra{\instalo{\comp}} \measb{\comp}^\dagger\right)\right] \right\|_{\rm tr}.
\end{equation}

Next, each measurement operator $\measb{\either}$ can be obtained by a sequence of measurement operators in our recursive measurement scheme, which will yield a sequence of measurement results $d_1, d_2, \ldots$.
Each ${\bf b}$ will correspond to a unique sequence of $d_\ell$ measurement results.
(Recall that $b_j$ are the individual results of measurements on uncompressed qubits, whereas $d_\ell$ are the individual results from the recursive measurement.)
Therefore we can relabel the basis states such that we have
\begin{equation}
\label{eq:dc}
D'_{\rm av} = \left\| \sum_{\bf d} \left\{\ket{{\bf d}}\bra{{\bf d}}\otimes\trc \left[\measc{\unco} \ket{\instate{\unco}}\bra{\instate{\unco}}(\measc{\unco})^\dagger\right] -  \ket{{\bf d}}\bra{{\bf d}}\otimes\trc \left[\measc{\comp} \ket{\instate{\comp}}\bra{\instate{\comp}}(\measc{\comp})^\dagger\right]\right\} \right\|_{\rm tr}.
\end{equation}
Now the measurement operators that are chosen at step $\ell$ in the recursive measurement scheme will depend on the measurement results that have been obtained at steps $1$ to $\ell-1$.
Therefore we can write the measurement operators as
\begin{align}
\measc{\either} &= \prod_{\ell=1}^{K} \meas{\either}{\ell}.
\end{align}
Here $K$ is the number of measurement operators to locate the ones.
For measurement result ${\bf b}$, the number of measurements required is no more than $1+2|{\bf b}|\log m$.
As we are taking ${\bf b}$ such that $|{\bf b}|\le k'$, we can take $K=1+2k'\log m$.

Using this notation, we can define CPTP maps by
\begin{equation}
\chan{\either}{\ell} (\rho) := \sum_{d_1,\ldots,d_\ell} \ket{d_\ell}\bra{d_\ell}\otimes \measp{\either}{\ell}\rho (\measp{\either}{\ell})^\dagger,
\end{equation}
where
\begin{equation}
\measp{\either}{\ell} := \ket{d_{\ell-1}}\bra{d_{\ell-1}} \otimes \cdots \otimes\ket{d_{1}}\bra{d_{1}} \otimes \meas{\either}{\ell}.
\end{equation}
Each map simply performs the appropriate measurement based on the prior measurement results (which are stored in ancillas), and appends an ancilla depending on the result of the measurement.
In term of these maps, the trace distance we wish to bound may be written as
\begin{equation}
D'_{\rm av} = \left\| \trc\chan{\unco}{K}\ldots\chan{\unco}{1}(  \ket{\instate{\unco}}\bra{\instate{\unco}}) -  \trc\chan{\comp}{K}\ldots\chan{\comp}{1} ( \ket{\instate{\comp}}\bra{\instate{\comp}}) \right\|_{\rm tr}.
\end{equation}

As has been noted above, we can omit the high Hamming weight contributions to the states $\ket{\instate{\either}}$, with a possible change in the trace distance of $O(\varepsilon)$.
The reason for this is that the trace distance is non-increasing under CPTP maps.
Our goal is now to successively approximate each of the maps in the sequence, at each stage bounding the introduced error by $O(\varepsilon)$.
At the end we will obtain two identical states, and then bound $D'_{\rm av}$ by $O(K\varepsilon)$.

More specifically, we want to approximate the evolution of the states for given measurement results as in Eqs.\ \eqref{eq:comap} and \eqref{eq:unap}.
Note that the reasoning given in the proof of Theorem~\ref{thm:errorfree} also gives a recursive method to determine the amplitudes in the states $\ket{\midstate{\either}{\ell}}$, starting from $\co_{x,0}^{()}=i^{|x|}\ca_{x}^m$.
This means that the definitions of these states are unambiguous.
We now consider the approximate unnormali{\s}ed states after $\ell-1$ measurements in the recursive measurement scheme $\ket{\midstate{\either}{\ell-1}}$, as given by Eqs.\ \eqref{eq:ent3} and \eqref{eq:ent4}.
We then define the states including the ancilla qubits containing the measurement results as
\begin{align}
\midrho{\either}{\ell-1} &:= \sum_{d_1,\ldots,d_{\ell-1}} \ket{d_{\ell-1}}\bra{d_{\ell-1}}\otimes\ldots\otimes\ket{d_1}\bra{d_1}\otimes \ket{\midstate{\either}{\ell-1}}  \bra{\midstate{\either}{\ell-1}}.
\end{align}
We wish to bound the error in approximating $\chan{\either}{\ell}(\midrho{\either}{\ell-1})$ by $\midrho{\either}{\ell}$.

In approximating $\chan{\unco}{\ell}(\midrho{\unco}{\ell-1})$ by $\midrho{\unco}{\ell}$ there is only one approximation: that of omitting the high Hamming weight states in applying the rotation.
The error in this approximation will be $O(\varepsilon)$ times the norm of the state.
Because the norm of the state is only changed by omitting high Hamming weight components, it can only be decreased.
Therefore the error is $O(\varepsilon)$.
Similarly, there is error in approximating $\chan{\comp}{\ell}(\midrho{\comp}{\ell-1})$ by $\midrho{\comp}{\ell}$ due to omitting high Hamming weight components, which is bounded by $O(\varepsilon)$.
There is also error because the $U_n$ rotations are not performed exactly.
Two such rotations are performed, each with error bounded by $O(\varepsilon)$, resulting in the overall error being bounded by $O(\varepsilon)$.

Therefore, we can start with Eq.\ \eqref{eq:dc}, remove the high Hamming weight components from the initial states, then proceed taking $\ell=1$ to $K$, replacing $\chan{\either}{\ell}(\midrho{\either}{\ell-1})$ by $\midrho{\either}{\ell}$ at each step.
At each step the distance is increased by $O(\varepsilon)$, and there are $K$ steps, so we obtain
\begin{equation}
D'_{\rm av} \le O(K\varepsilon) + \left\| \trc (\midrho{\unco}{K})-\trc ( \midrho{\comp}{K}) \right\|_{\rm tr}.
\end{equation}
But, because the same amplitudes have been obtained for the compressed and uncompressed cases, the same state is obtained after tracing over the control registers, and  $\trc (\midrho{\unco}{K})=\trc (\midrho{\comp}{K})$.
Therefore we obtain
\begin{equation}
D'_{\rm av} = O(K\varepsilon) = O(\varepsilon k'\log m).
\end{equation}
As $D_{\rm av}=O(\varepsilon'+D'_{\rm av})$, this yields Eq.\ \eqref{eq:erbnd}, as required.
\end{proof}

To summari{\s}e the sources of error in the above proof, these are as follows.
\begin{enumerate}
\item Omitting measurement results with Hamming weight greater than $k'$; see Eq.~\eqref{eq:erhigh}.
\item Omitting the high Hamming weight components of the initial states; see Eq.~\eqref{eq:erinhi}.
\item Omitting high Hamming weight components in each step of the recursive measurement.
\item Inaccuracy in performing the $U_n$ operations in each step of the recursive measurement.
\end{enumerate}

Error sources 3 and 4 give a contribution to the error of $O(\varepsilon)$ times the norm of the state for each step of the recursive measurement.
However, for many initial sequences of measurement results, at step $\ell$ all ones have already been located, so there are no further measurements needed.
This means that the measurements at this point are just the identity, and no further error is introduced for that sequence of initial measurement results.
This means that bounding the additional error by $O(\varepsilon)$ for each $\ell$ overestimates the error.
We will show that the error can be bound by using the mean number of ones that are measured.
In the case of the uncompressed measurements, the probability of each one is $\le 4\alpha^2\beta^2$.
Because $\beta^2\approx 1/8m$, the expected number of ones is $\le 4\beta^2 m=O(1)$.

\begin{theorem}[Improved error bounds]\label{thm:imper}
Provided $\varepsilon=O(1/(k'\log m))$, the error between the compressed and uncompressed schemes can be bounded as
\begin{equation}
D_{\rm av} = O(\varepsilon'+\varepsilon \log m).
\end{equation}
\end{theorem}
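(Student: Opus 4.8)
The plan is to sharpen the telescoping estimate used in the proof of Theorem~\ref{thm:error}. There one has $D_{\rm av}=O(\varepsilon'+D'_{\rm av})$, and $D'_{\rm av}$ is controlled by a hybrid argument in which the $K=1+2k'\log m$ channels $\chan{\either}{\ell}$ are replaced one at a time by the exact step maps $\midrho{\either}{\ell-1}\mapsto\midrho{\either}{\ell}$, each replacement costing $O(\varepsilon)$ times the norm of the state it acts on. The new ingredient is that $\meas{\either}{\ell}$ equals the identity on every branch of the recursion on which the node handled at step $\ell$ is never reached --- because some ancestor block returned measurement result $0$, or because $k'$ ones have already been located --- so that on those branches no error is introduced at all. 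Hence the error charged to step $\ell$ is only $O(\varepsilon)\,p_\ell$, where $p_\ell$ is the trace of the ``active'' part of $\midrho{\either}{\ell-1}$, i.e.\ the probability that the recursion actually reaches the node processed at step $\ell$. By Theorem~\ref{thm:errorfree} the states in Eqs.~\eqref{eq:ent3} and~\eqref{eq:ent4} carry identical amplitudes in the two schemes (and the recompression of Lemma~\ref{lem:split} is exact), so these active-part traces coincide in the compressed and uncompressed schemes, and equal the probabilities of the error-free, equivalently uncompressed, recursive measurement.

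Next I would bound $\sum_{\ell=1}^{K}p_\ell$ by the expected number of nodes the recursive procedure visits. The recursion tree is a binary tree of depth $\log m$ whose depth-$j$ nodes index the $2^j$ dyadic sub-blocks of $[m]$; a node is visited only if its parent's block contains at least one ``one'' of the effective outcome ${\bf b}$ (and the $k'$-truncation has not yet fired). By Markov's inequality the probability that a block contains a one is at most the expected number of ones it contains, so the visiting probabilities at a fixed depth sum to at most $2\,\mathbb{E}[\abs{{\bf b}}]$, and summing over the $\log m$ depths gives $\sum_\ell p_\ell=O(1+\mathbb{E}[\abs{{\bf b}}]\log m)$. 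In the uncompressed scheme each position is recorded as a one with probability at most $4\alpha^2\beta^2\le 4\beta^2$, and $\beta^2\approx 1/8m$ gives $\mathbb{E}[\abs{{\bf b}}]\le 4\beta^2 m=O(1)$; hence $\sum_\ell p_\ell=O(\log m)$. The hybrid bound then reads $D'_{\rm av}= O(\varepsilon\sum_\ell p_\ell)=O(\varepsilon\log m)$, and $D_{\rm av}=O(\varepsilon'+\varepsilon\log m)$ follows. The hypothesis $\varepsilon=O(1/(k'\log m))$ is what makes this chain legitimate: it forces the cumulative distortion of the measurement probabilities, which Theorem~\ref{thm:error} already bounds by $O(\varepsilon'+\varepsilon k'\log m)$, to remain $O(1)$, so that the error-free probabilities may be used as a proxy for the true ones in the weighted sum and the $k'$-truncation contributes only $O(\varepsilon')$.

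The part I expect to be most delicate is the justification that the per-step error really is localized and of size $O(\varepsilon)\,p_\ell$. Concretely, one must check that when the true channel $\chan{\comp}{\ell}$ --- which in practice uses the error-flagged operations $\widetilde U_n$, $\widetilde U_n^\dagger$ of Section~\ref{ssec:msmt} rather than the idealized $U_n$ --- is swapped for the exact step map, the resulting perturbation is supported only on the branches where step $\ell$ performs a nontrivial projection, and has trace-norm there at most $O(\varepsilon)$ times the trace of that sub-state (two approximations of the $U_n$ rotation, each $O(\varepsilon)$, together with the high-Hamming-weight omission). One must also reconcile the $k'$-truncation of the recursion with the $k$-level high-Hamming-weight truncations of Theorem~\ref{thm:error}, so that after all $K$ replacements the telescoping sum still terminates in two states with $\trc(\midrho{\unco}{K})=\trc(\midrho{\comp}{K})$. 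With those points settled, the counting argument above closes the proof.
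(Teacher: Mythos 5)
Your proposal is correct and follows essentially the same route as the paper: split each $\midrho{\either}{\ell-1}$ into the part on which the recursion is still active and the part on which it has terminated, charge $O(\varepsilon)$ per step only against the active trace, bound the sum of active traces by $O((1+\langle\abs{{\bf b}}\rangle)\log m)=O(\log m)$, and invoke $\varepsilon=O(1/(k'\log m))$ to keep the feedback of accumulated error into those weights bounded (the paper makes this last point explicit via the recursion $E_\ell\le E_{\ell-1}[1+O(\varepsilon)]+O(\varepsilon A_{\ell-1})$, yielding an $\exp(\varepsilon K)=O(1)$ factor). Your per-depth tree count of $\sum_\ell p_\ell$ is a cosmetic variant of the paper's summation $\sum_h p(\abs{{\bf b}}\ge h)=\langle\abs{{\bf b}}\rangle+1$, and the delicate points you flag are exactly the ones the paper resolves.
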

\begin{proof}
More specifically, $\midrho{\either}{\ell-1}$ will have a component where the recursive measurement scheme has not terminated yet, and another measurement needs to be performed.
This component will be that where the ancillas contain $d_1,\ldots,d_{\ell-1}$ corresponding to sequences of measurement results such that further measurements need to be performed.
There will also be a component corresponding to sequences of measurement results where the recursive measurement scheme has finished.
We will denote the components corresponding to that where the recursive measurement has not terminated or has terminated by $\midrho{\either}{\ell-1}^{\rm con}$ and $\midrho{\either}{\ell-1}^{\rm fin}$, respectively.
More explicitly, if we denote by $S_{\rm con}$ and $S_{\rm fin}$ the sets of measurement results $(d_1,\ldots,d_{\ell-1})$ that correspond to a recursive measurement that has not terminated or has terminated, respectively, then we have
\begin{align}
\midrho{\either}{\ell-1}^{\rm con/fin} &:= \sum_{(d_1,\ldots,d_{\ell-1})\in S_{\rm con/fin}} \ket{d_{\ell-1}}\bra{d_{\ell-1}}\otimes\ldots\otimes\ket{d_1}\bra{d_1}\otimes \ket{\midstate{\either}{\ell-1}}  \bra{\midstate{\either}{\ell-1}}.
\end{align}

Because the measurement acts only on $\midrho{\either}{\ell-1}^{\rm con}$, and the error in the measurement is bound by $O(\varepsilon)$ \emph{times} the trace of the state the measurement acts upon,
the error in approximating $\chan{\either}{\ell}(\midrho{\either}{\ell-1})$ by $\midrho{\either}{\ell}$ will be bounded by $O(\varepsilon\Tr(\midrho{\either}{\ell-1}^{\rm con}))$.
Therefore the total error from sources 3 and 4 is bounded by
\begin{equation}
O\left(\varepsilon\sum_{\ell=1}^K \Tr(\midrho{\either}{\ell-1}^{\rm con})\right).
\end{equation}
But, because the number of measurement steps need be no larger than $1+h\log m$, where $h$ is the number of ones found by the measurement, the probability that the number of ones is $\ge h$ is no greater than $\Tr(\midrho{\either}{h\log m}^{\rm con})$.
Denoting the probability that the number of ones is $\ge h$ by $p(|{\bf b}|\ge h)$, we have
\begin{equation}
\sum_{h=0}^{m} p(|{\bf b}|\ge h) =\sum_{h=0}^{m} \sum_{j=h}^m p(|{\bf b}|=j) =  \sum_{j=0}^m \sum_{h=0}^{j}p(|{\bf b}|=j) =  \sum_{j=0}^m (j+1) p(|{\bf b}|=j) =\langle |{\bf b}| \rangle +1.
\end{equation}
Therefore we can bound the sum of the traces by
\begin{align}
\sum_{\ell=1}^K \Tr(\midrho{\either}{\ell-1}^{\rm con}) &\le \sum_{\ell=1}^K \Tr(\midrho{\either}{\lfloor (\ell-1)/\log m \rfloor\log m}^{\rm con}) \nonumber \\
&\le \log m \sum_{h=0}^m  p(|{\bf b}|\ge h) = (\langle |{\bf b}| \rangle +1)\log m.
\end{align}

Next we need to take into account the fact that here the expectation value of the number of ones is for the approximate states $\midrho{\unco}{\ell-1}^{\rm con}$, not for the exact uncompressed measurement scheme.
To take account of this difference, we can use the cumulative error to bound the error in the norm of the state at each step.
Note that the norm of $\midrho{\either}{\ell-1}^{\rm con}$ is the same for $\either=\unco$ and $\either=\comp$, so we only need perform the analysis for $\either=\unco$.
Let $A_{\ell-1}$ denote the norm, for the exact uncompressed measurement, of the component where the recursive measurement scheme has not stopped before step $\ell$.
In addition, let $E_{\ell-1}$ denote the cumulative error before step $\ell$.
Then the increment in the error is bound by $\varepsilon$ times the norm of the non-terminated component, which is bound by $A_{\ell-1}$ plus the cumulative error.
\begin{align}
E_{\ell} &\le E_{\ell-1} + O(\varepsilon A_{\ell-1} + \varepsilon E_{\ell-1}) \nonumber \\
&= E_{\ell-1}[1+O(\varepsilon)] + O(\varepsilon A_{\ell-1}).
\end{align}
Multiplying both sides by $[1+O(\varepsilon)]^{K-\ell}$, we obtain
\begin{equation}
E_{\ell}[1+O(\varepsilon)]^{K-\ell} \le  E_{\ell-1}[1+O(\varepsilon)] ^{K-(\ell-1)}+ O(\varepsilon A_{\ell-1}[1+O(\varepsilon)]^{K-\ell}).
\end{equation}
As a result, the final error is bound by
\begin{align}
E_{K+1} &\le \sum_{\ell=1}^K O(\varepsilon A_{\ell-1}[1+O(\varepsilon)]^{K-\ell}) \nonumber \\
&\le [1+O(\varepsilon)]^K \sum_{\ell=1}^K O(\varepsilon A_{\ell-1}) \nonumber \\
&\le  O(\exp(\varepsilon K)\varepsilon(\langle|{\bf b}| \rangle +1)\log m).
\end{align}
Here the expectation value of the number of ones is for the exact scheme, which is $O(1)$, and we therefore find that the error is bound by $O(\exp(\varepsilon K)\varepsilon\log m)$.
Recall that we take $K=O(k'\log m)$.
This means that, provided $\varepsilon = O(1/(k'\log m))$, $\exp(\varepsilon K)$ is $O(1)$, and we obtain scaling of the error of $O(\varepsilon\log m)$.
Adding $O(\varepsilon'+\varepsilon)$ to take account of error sources 1 and 2 yields the result given in the Theorem.
\end{proof}

Given the conditions of this Theorem, the overall error for each time step is $O(\varepsilon'+\varepsilon\log m)$.
This includes error in simulating the driving Hamiltonian.
The driving Hamiltonian may be applied up to $k'$ times, though the expected number of times is $O(1)$.
As the allowable error in the driving Hamiltonian is $O(\varepsilon')$, that gives a contribution of $O(\varepsilon')$ to the error in each time step.
As there are $O(T)$ time steps, the total error is $O(\varepsilon' T+\varepsilon T\log m)$.
To limit the error of the overall scheme to $\varepsilon_{\rm tot}$, we take $\varepsilon'=O(\varepsilon_{\rm tot}/T)$ and $\varepsilon=O(\varepsilon_{\rm tot}/(T\log m))$.
Then $k'=O(\log(1/\varepsilon'))=O(\log(T/\varepsilon_{\rm tot}))$.
As we consider large $T$ and small $\varepsilon_{\rm tot}$, we therefore have $\varepsilon=O(1/[\log(T/\varepsilon_{\rm tot})\log m])=O(1/(k'\log m))$.
This means that the condition of the Theorem is satisfied with this choice of parameters, and the total error will be bounded by $\varepsilon_{\rm tot}$.

\section{Proof of Main Theorem}
\label{sec:proof}

Finally we are in a position to prove Theorem \ref{th:main}.

\begin{proof}{\bf of Theorem \ref{th:main}.}
First, the number of oracle queries is $O(k'T)$, because we have divided the simulation into $O(T)$ time intervals, and limit the number of queries required within each time interval to $O(k')$.
The value of $k'$ is chosen to ensure that the error due to omitting high Hamming-weight states $O(1)$ times within each time interval is no more than $\varepsilon'$.
We can bound the total error by $\varepsilon_{\rm tot}$ if we take $\varepsilon'=O(\varepsilon_{\rm tot}/T)$, which means that $k'$ scales as
\begin{equation}
\label{eq:smaller}
k' = O\left( \frac{\log(T/\varepsilon_{\rm tot})}{\log\log(T/\varepsilon_{\rm tot})} \right).
\end{equation}
Then the overall number of oracle calls scales as
\begin{equation}
O\left( \frac{T\log(T/\varepsilon_{\rm tot})}{\log\log(T/\varepsilon_{\rm tot})} \right).
\end{equation}
Omitting the dependence on $\varepsilon_{\rm tot}$ gives the result given in the statement of the Theorem.

Next we discuss the number of gates required for Alg.~\ref{alg:measure}.
The maximum number of steps in the recursive procedure is $1+2k'\log m$, but the expected number of steps is $O(\log m)$.
For the full algorithm for the evolution over time $T$, there are many of these recursive measurements, and the probability of the average number of steps differing significantly from its expected value is small.
Similarly to the analysis in Section \ref{sec:over}, an upper bound of $O(1/\varepsilon_{\rm tot})$ times the average value will not be exceeded with probability $1-O(\varepsilon_{\rm tot})$.
As $\varepsilon_{\rm tot}$ is taken to be constant, this does not affect the final result.
Because $U_n$ and $U_n^\dagger$ are performed at each step, these operations are performed $O(\log m)$ times.
As was found above, the complexity of the operation $U_n$ is $O(k[\log m+\log\log(1/\varepsilon)])$.
Therefore the overall complexity for this time step is $O(k[(\log m)^2+\log m\log\log(1/\varepsilon)])$.

It is also necessary to perform $O(k')$ time evolutions under the driving Hamiltonian.
In the definition of the problem we let $G$ be the number of gates required for the simulation of the driving Hamiltonian, so that the number of gates to simulate the driving Hamiltonian in this time step is $O(k'G)$.
Therefore, the scaling for the total number of gates is
\begin{equation}
O\left( TGk' + Tk [(\log m)^2+\log m\log\log(1/\varepsilon)] \right).
\end{equation}

Next we use $\varepsilon'=O(\varepsilon_{\rm tot}/T)$ and $\varepsilon=O(\varepsilon_{\rm tot}/(T\log m))$.
As discussed in Section \ref{sec:m}, we can take $\log m=O(\log(\|H\|T/\varepsilon_{\rm tot}))$.
Considering the scaling with large $\|H\|$, the total number of gates simplifies to
\begin{equation}
O\left( \frac{TG\log(T/\varepsilon_{\rm tot})}{\log\log(T/\varepsilon_{\rm tot})}
+\frac{T\log[(T\log m)/\varepsilon_{\rm tot}]}{\log\log[(T\log m)/\varepsilon_{\rm tot}]}(\log m)^2
\right).
\end{equation}
A further simplification may be obtained by ignoring the double-log factors in the denominators, and then using the scaling of $\log m$ to give
\begin{equation}
O\left( TG\log(T/\varepsilon_{\rm tot})
+T[\log(T/\varepsilon_{\rm tot})+\log\log\|H\|][\log(T/\varepsilon_{\rm tot})+\log\|H\|]^2
\right).
\end{equation}
The number of gates can then be bounded in a simpler but looser form as
\begin{equation}
O\left( TG\log(T/\varepsilon_{\rm tot})
+T[\log(\|H\|T/\varepsilon_{\rm tot})]^3
\right).
\end{equation}
Omitting $\varepsilon_{\rm tot}$, because we take this quantity to be constant, gives the scaling in the Theorem.

The number of qubits required for the algorithm is dominated by the number of qubits required for the recursive measurement scheme.
The number of qubits used for the ancilla space is $O(k[\log m+\log\log(1/\varepsilon)])$.
In the recursive measurement scheme it may be necessary to duplicate the ancilla space $k'$ times to ensure that a maximum of $k'$ ones are detected.
The overall space used is therefore
\begin{equation}
O\left(\frac{\log[(T\log m)/\varepsilon_{\rm tot}]}{\log\log[(T\log m)/\varepsilon_{\rm tot}]}\frac{\log(T/\varepsilon_{\rm tot})}{\log\log(T/\varepsilon_{\rm tot})}[\log m+\log\log(T\log m/\varepsilon_{\rm tot})]\right).
\end{equation}
Cancelling the double-log, then omitting double-log factors in the denominator gives
\begin{equation}
O\left(\log[(T\log m)/\varepsilon_{\rm tot}]\log(T/\varepsilon_{\rm tot})\log m\right).
\end{equation}
Using the scaling of $m$ then gives
\begin{equation}
O\left(\log(T/\varepsilon_{\rm tot})[\log(T/\varepsilon_{\rm tot})+\log\log\|H\|]\log(\|H\|T/\varepsilon_{\rm tot})\right).
\end{equation}
A simpler bound can be given as
\begin{equation}
O\left([\log(\|H\|T/\varepsilon_{\rm tot})]^3\right).
\end{equation}
Again omitting $\varepsilon_{\rm tot}$ gives the scaling in the statement of the Theorem.

Note also that the allowable error in the driving Hamiltonian is $O(\varepsilon')$, which is $O(\varepsilon_{\rm tot}/T)$.
For constant $\varepsilon_{\rm tot}$, the allowable error in the implementation of the driving Hamiltonian is $O(1/T)$, as given in the statement of the Theorem.
\end{proof}

\section{Conclusions}
We have shown that any continuous-time query algorithm of cost $T$ can be implemented with a number of discrete queries close to linear in $T$, and with a number of gates that is also close to linear in $T$.
This means that any continuous-time quantum algorithm can be converted into an efficient discrete-query algorithm.
In contrast, using the algorithm of Ref.~\cite{CleveG+2009} directly would result in a number of gates that is linear in $mT$.
That is, the gate complexity would be superlinear in $\|H\|T$, and similar to what would be obtained just using product formulae.

Our results provide an even better improvement in the scaling with $\|H\|$; the number of gates is polylogarithmic in this quantity, rather than superlinear.
As the norm of the driving Hamiltonian can potentially be very large, this can potentially provide a very large improvement in efficiency.
In both cases, the query complexity is independent of $\|H\|$, but it does not appear to be possible to completely remove the dependence of the number of gates on $\|H\|$ via this approach.

The methods we have presented may also be used as an alternative to product formulae when simulating state evolution for a sum of Hamiltonians, where one Hamiltonian is self-inverse, and the other has large norm, $\|H\|$.
Previous work has considered the complexity of Hamiltonian simulation via product formulae where one Hamiltonian has much larger norm \cite{Papa2012}.
Even using that approach, the complexity is only reduced from $O(\|H\|T(\|H\|T/\varepsilon_{\rm tot})^{\delta})$ to $O(\|H\|T(T/\varepsilon_{\rm tot})^{\delta})$.
In comparison, here we have obtained complexity that is polylogarithmic in $\|H\|$.

\end{document}